\documentclass[12pt]{article}
\usepackage{amsfonts}
\usepackage{amssymb}
\usepackage{mathrsfs}
\usepackage{srcltx}
\textwidth 155mm \textheight 245mm \oddsidemargin 15pt
\evensidemargin 0pt \topmargin -2cm \headsep 0.3cm

\usepackage{amsmath}
\usepackage{amsthm}
\usepackage{amstext}
\usepackage{amsopn}
\usepackage{graphicx}

\newtheorem{theorem}{Theorem}[section]
\newtheorem{lemma}[theorem]{Lemma}

\newtheorem{corollary}[theorem]{Corollary}

\theoremstyle{definition}

\theoremstyle{remark}

\numberwithin{equation}{section}

\newcommand{\ba}{\begin{array}}
\newcommand{\ea}{\end{array}}

\begin{document}

\date{}
\title{ \bf\large{ Periodic Solutions for $N$-Body-Type Problems}}

\author{Fengying Li\textsuperscript{1}\footnote{Email:lify0308@163.com }\ \ Shiqing Zhang\footnote{Email:zhangshiqing@msn.com}\textsuperscript{2}
 \\
{\small \textsuperscript{1}School of Economic and Mathematics,Southwestern University of Finance and Economics,\hfill{\ }}\\
\ \ {\small Chengdu, Sichuan, 611130, P.R.China.\hfill{\ }}\\
{\small \textsuperscript{2} Department of Mathematics, Sichuan University,\hfill{\ }}\\
\ \ {\small Chengdu, Sichuan, 610064, P.R.China.\hfill {\ }}}

\maketitle
\begin{abstract}
{We consider non-autonomous $N$-body-type problems with strong force type
potentials at the origin and sub-quadratic growth at infinity, and using Ljusternik-Schnirelmann theory, we prove the existence of unbounded sequences of critical values for the Lagrangian action corresponding to non-collision periodic solutions.}

 \noindent{\emph{Keywords}}: Periodic solutions, N-body type problems , variational methods.

 \noindent{\emph{2000 Mathematical Subject Classification}}: 34C15, 34C25, 58F,70F07.
\end{abstract}

\section {Introduction and Main Result}

In the 1975 paper of Gordon (\cite{13}), we find the first prominent use of variational methods in the study of
periodic solutions of Newtonian equations with singular potential
 $V(t,x)\in C^1([0,T]\times
(R^n\backslash S),R)$ :
\begin{equation}\label{1.1}
\left\{\begin{array}{l}
\ddot{x}+V^{\prime}(t,x)=0,\ \ \ \ x\in R^n\\
x(t+T)=x(t),
\end{array}\right.
\end{equation}
where $V(t+T,x)=V(t,x)$ satisfies the following Gordon's strong-force (SF)
condition:

There exists a neighborhood $N$ of the set $S$ and a $C^2$ function $U$ on
$N-S$ such that
\begin{enumerate}
\item[(i)] $U(x)\rightarrow -\infty$ as $x\rightarrow S$,

\item[(ii)] $-V(t,x)\geq |\nabla U(x)|^2, \forall x\in N\backslash S$.
\end{enumerate}
Using variational minimizing methods,Gordon proved the following theorem:

\begin{theorem}\label{th1.1}
{\bf(Gordon)}\ \ Under the above conditions and $(V_1)$:\\
$$ V(t,x)<0,x\neq 0,$$
 we have that there exist periodic solutions which
tie (wind around) $S$ and have arbitrary given topological
(homotopy) type and  given period.
\end{theorem}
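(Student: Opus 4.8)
The plan is to realize the desired solution as a minimizer of the Lagrangian action over a fixed free homotopy class of loops, with Gordon's strong-force hypothesis serving precisely to exclude collisions in the limit. Introduce the action
\[
f(x)=\int_0^T\Big[\tfrac12|\dot x(t)|^2-V(t,x(t))\Big]\,dt
\]
on the space $\Lambda$ of $T$-periodic $H^1$ loops taking values in $R^n\backslash S$; its critical points are exactly the $T$-periodic solutions of \eqref{1.1}. Fix the prescribed free homotopy class $\mathcal{C}$ of loops winding around $S$ in the given topological type and set $c=\inf_{x\in\mathcal{C}}f(x)$. By $(V_1)$ we have $-V\ge 0$, so $f(x)\ge\tfrac12\int_0^T|\dot x|^2\ge 0$; since $\mathcal{C}$ contains smooth loops avoiding $S$ (of finite action), the number $c$ is finite and nonnegative.

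First I would take a minimizing sequence $x_k\in\mathcal{C}$ with $f(x_k)\to c$. The bound $f(x_k)\ge\tfrac12\|\dot x_k\|_{L^2}^2$ controls the kinetic part, and a non-contractible loop linking $S$ cannot drift to infinity while keeping bounded length (its diameter is at most $\sqrt{T}\,\|\dot x_k\|_{L^2}$), so the sequence stays in a bounded region of $R^n$; hence $\{x_k\}$ is bounded in $H^1$. Passing to a subsequence, $x_k\rightharpoonup x_*$ weakly in $H^1$ and, by the compact embedding $H^1([0,T])\hookrightarrow C([0,T])$, uniformly.

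The crux, and the step I expect to be the main obstacle, is to show that $x_*$ does not touch $S$, which is exactly where the strong-force condition enters. Suppose for contradiction that $x_*(t_0)\in S$. From condition (ii), $|\nabla U(x)|\le\sqrt{-V(t,x)}$, together with $\frac{d}{dt}U(x_*(t))=\nabla U(x_*)\cdot\dot x_*$ and the elementary inequality $\sqrt{-V}\,|\dot x_*|\le\tfrac12(-V)+\tfrac12|\dot x_*|^2$, one obtains
\[
\Big|\tfrac{d}{dt}U(x_*(t))\Big|\le\tfrac12|\dot x_*(t)|^2+\tfrac12\big(-V(t,x_*(t))\big).
\]
Integrating over a small interval ending at $t_0$ and letting the endpoint tend to $t_0$, condition (i) forces $|U(x_*(t))|\to\infty$, so the left-hand side has infinite integral; since $\int|\dot x_*|^2<\infty$ (because $x_*\in H^1$), this compels $\int_0^T(-V(t,x_*))\,dt=+\infty$, i.e.\ $f(x_*)=+\infty$. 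On the other hand, uniform convergence of $x_k$ to $x_*$ together with Fatou's lemma (the integrand $-V$ being nonnegative) gives $\liminf_k f(x_k)\ge f(x_*)=+\infty$, contradicting $f(x_k)\to c<\infty$. Hence $x_*$ stays away from $S$.

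Once collisions are excluded the argument closes along classical lines. Because $x_*$ avoids $S$ and $x_k\to x_*$ uniformly, for large $k$ the loops $x_k$ and $x_*$ are homotopic in $R^n\backslash S$, so $x_*\in\mathcal{C}$; weak lower semicontinuity of the convex kinetic term and continuity of the now-regular potential term give $f(x_*)\le\liminf_k f(x_k)=c$, whence $f(x_*)=c$ and the infimum is attained. Since $\mathcal{C}$ is open in $\Lambda$, $x_*$ is an unconstrained local minimizer over loops avoiding $S$, hence a critical point of $f$, and by standard bootstrap regularity for its Euler--Lagrange equation (using $V\in C^1$) it is a classical $T$-periodic solution of \eqref{1.1} of the prescribed homotopy type and period. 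The only genuinely delicate point is the collision-exclusion step, powered entirely by the strong-force condition; everything else is the standard direct method of the calculus of variations.
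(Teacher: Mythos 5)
The paper itself contains no proof of Theorem \ref{th1.1}: it is quoted as background from Gordon's paper \cite{13}, with the remark that it was obtained by ``variational minimizing methods''. Your strategy --- minimizing the action over a prescribed free homotopy class of loops in $R^n\backslash S$ and letting the strong-force condition exclude collisions in the limit --- is exactly that method, so the issue is not the route but whether your argument closes.

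It does not quite close: there is a genuine gap in the collision-exclusion step. Your chain-rule argument needs an interval of times $(t',t_0)$ adjacent to the collision instant on which $x_*(t)\in N\backslash S$: only there is $U(x_*(t))$ defined, and only with such an anchor $t'$ (where $U(x_*(t'))$ is finite) does the divergence $U(x_*(t))\to -\infty$ force an infinite integral. This works when the collision set $\{t:\ x_*(t)\in S\}$ is a proper closed subset of the circle (take $t_0$ to be an endpoint of a complementary arc), but it says nothing in the degenerate case where the limit loop lies entirely in $S$ --- for instance when the minimizing sequence shrinks to a single point $p\in S$. That degeneration is \emph{not} excluded by the homotopy constraint: a small ``meridian'' loop about a point of $S$ (e.g.\ a circle of radius $\varepsilon$ about the origin when $S=\{0\}\subset R^2$) winds around $S$, hence lies in a nontrivial class, yet converges uniformly to a constant curve in $S$. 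In that case $U\circ x_*$ is defined at no time, your Fatou step yields only $\liminf_k f(x_k)\ge 0$, and the contradiction evaporates. Moreover, conditions (i)--(ii) together with $V<0$ do \emph{not} force the pointwise blow-up $-V(t,x)\to+\infty$ as $x\to S$ that would rescue this case: $U$ may have critical points arbitrarily close to $S$ (take $U(x)=\log|x|+\sin(1/|x|)$ near $S=\{0\}$ and $-V=|\nabla U|^2+\eta$ with $\eta>0$ small; then (i), (ii) and $V<0$ all hold), and circles at the radii $r_k\to 0$ where $\nabla U$ vanishes have action tending to $\eta T$, so a minimizing sequence in a nontrivial class can collapse onto $S$ with bounded action. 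This is precisely why the paper's own version of Gordon's lemma (Lemma \ref{l2.2}) assumes, in addition to the strong-force condition (V5), the pointwise divergence (V3) of the potential at collisions; your proof needs either such a hypothesis (present in Gordon's actual setting) or a separate argument ruling out total collapse of the minimizing sequence onto $S$. A smaller point: your claim that minimizing loops cannot drift to infinity tacitly uses that $S$ is bounded, which should be stated.
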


Ambrosetti-CotiZelati (\cite{1},\cite{2}) used Morse theory to generalize
Gordon's result and obtained the following theorem:

\begin{theorem}\label{th1.2}
 Assume $V\in C^2([0,T]\times
R^n,R),V(t+T,x)=V(t,x)$ satisfies Gordon's strong force condition
and $(V_i)$:
$$|V(t,x)|,\ \ \ \ |V_x(t,x)|\rightarrow 0$$
uniformly for all $t$ as $\|x\|\rightarrow\infty$, and $\exists R_1>0$
s.t.
$$V(t,x)<0,\ \ \ \ \forall \|x\|\geq R_1,$$
then (\ref{1.1}) has infinitely many $T$-periodic solutions.
\end{theorem}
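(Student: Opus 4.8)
The plan is to obtain the $T$-periodic solutions as critical points of the Lagrangian action
$$ f(x)=\int_0^T\Big[\tfrac12|\dot x(t)|^2-V(t,x(t))\Big]\,dt $$
on the Hilbert manifold $\Lambda=\{x\in H^1(\R/T\mathbb Z;\R^n):x(t)\neq0\ \text{for all }t\}$, i.e.\ the free loop space of $\R^n\setminus\{0\}$. Since $V\in C^2$, one checks that $f\in C^2(\Lambda,\R)$ and that its critical points are exactly the non-collision $T$-periodic solutions of \eqref{1.1}. The source of infinitely many solutions will be topological: $\R^n\setminus\{0\}$ is homotopy equivalent to $S^{n-1}$, and the free loop space $\Lambda S^{n-1}$ has nonzero homology in infinitely many degrees (for $n=2$, infinitely many connected components indexed by the winding number), so a minimax Ljusternik-Schnirelmann/Morse scheme over these classes should produce an unbounded sequence of critical values.

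First I would use Gordon's strong force condition to remove collisions. Hypotheses (i) and (ii) force $f(x_k)\to+\infty$ along any sequence of loops that approaches the singular set, since $-\int_0^T V(t,x_k)\,dt\ge\int_0^T|\nabla U(x_k)|^2\,dt$ and the right-hand side diverges as $\min_t|x_k(t)|\to0$ (a loop pinched toward $0$ must traverse the region where $U\to-\infty$). Hence on every sublevel set of $f$ the loops stay uniformly away from $0$, so $\Lambda$ behaves like a complete manifold for $f$ and no limiting critical point can be a collision orbit. Moreover, since $V$ is continuous and $|V|\to0$ at infinity it is bounded, say $|V|\le M$, so $f(x)\ge\tfrac12\|\dot x\|_{L^2}^2-MT$; thus $f$ is bounded below and controls the Dirichlet integral along any minimizing or Palais-Smale sequence.

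The main obstacle is compactness, because $|V|,|V_x|\to0$ makes the problem asymptotically the free one $\ddot x=0$, whose only $T$-periodic orbits are constants; a priori a Palais-Smale sequence could run off to infinity. The key observation I would exploit to close this gap is that escaping to infinity is incompatible with carrying nontrivial loop-space topology at bounded action: to represent a nontrivial class of $\Lambda S^{n-1}$ inside the region $\{\|x\|\ge R\}$ a loop must wind around a sphere of radius $\sim R$, so its Dirichlet energy grows like $R^2$ and $f\to+\infty$. Consequently, on each fixed nontrivial homology class the minimax sequences remain in a bounded region $\{r_0\le\|x\|\le R_0\}$ (the lower bound from the strong force, the upper bound from this length estimate), and there the compact embedding $H^1\hookrightarrow C^0$ yields convergence, so $f$ satisfies the Palais-Smale condition along such sequences. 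The sign condition $V<0$ for $\|x\|\ge R_1$ enters here to rule out spurious constant equilibria and to guarantee the sublevel deformations stay in the good region.

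Finally I would carry out the minimax. Let $\{\alpha_k\}$ be a sequence of nonzero homology classes of $\Lambda$ realized in strictly increasing dimensions, and set
$$ c_k=\inf_{[A]\ni\alpha_k}\ \sup_{x\in A}f(x). $$
The length-energy growth from the previous step shows $c_k\to+\infty$, since a class of higher topological complexity cannot be swept out below a fixed action level. By the deformation lemma, valid wherever the Palais-Smale condition holds, each $c_k$ is a critical value of $f$; and because the $c_k$ are unbounded they cannot all coincide, so they determine infinitely many distinct critical points, i.e.\ infinitely many $T$-periodic non-collision solutions of \eqref{1.1}. The step I expect to be genuinely delicate is making the confinement estimate uniform enough that the abstract deformation argument applies at every level $c_k$, that is, ruling out any escape of the minimax classes to infinity, which is exactly where the strong force at $0$ and the decay-with-sign of $V$ at infinity must be balanced.
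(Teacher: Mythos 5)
First, a point of comparison: the paper does not actually prove Theorem 1.2 --- it is quoted as background from Ambrosetti--Coti Zelati \cite{1},\cite{2}, who proved it by Morse theory. The closest material in the paper is the machinery used for Theorem \ref{th1.8}: Majer's abstract Ljusternik--Schnirelmann theorem with local Palais--Smale condition (Lemma \ref{l2.4}), the blow-up of $f$ at the collision set (Lemma \ref{l2.2}), the local (PS) condition (Lemma \ref{l2.3}), and the finite category of sublevel sets (Lemmas \ref{l2.6}--\ref{l2.8}). Your outline --- strong force to make $f$ blow up near the singular loops, infinite topology of the free loop space of $R^n\setminus\{0\}$, minimax over that topology, unbounded critical values --- is exactly this skeleton, so your route is sound in structure and is in fact the route the paper itself takes for its main theorem.

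The genuine gap is at the step you yourself single out as the crux: compactness. You propose to obtain the Palais--Smale condition from the estimate that a family representing a nontrivial class of $\Lambda S^{n-1}$ inside $\{\|x\|\geq R\}$ must contain a loop of Dirichlet energy of order $R^{2}$. That estimate is correct (for classes not supported on constant loops, via radial projection onto the sphere of radius $R$), but it is the right tool for a \emph{different} hypothesis of the abstract theorem, namely that every sublevel set $\{f\leq K\}$ has finite category (hypothesis 3 of Lemma \ref{l2.4}, the paper's Lemmas \ref{l2.6}--\ref{l2.8}), which is what makes the minimax values unbounded. It cannot prove (PS): a Palais--Smale sequence is an arbitrary sequence of \emph{individual} loops with $f(x_{j})\rightarrow c$, $f^{\prime}(x_{j})\rightarrow 0$, and an individual loop carries no loop-space topology, so nothing in your argument prevents it from drifting to infinity; moreover near-optimal representing families really do contain loops at arbitrarily large radius (e.g.\ nearly constant far-away loops), so your claim that ``minimax sequences remain in a bounded annulus'' is false as stated. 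The correct argument --- the one-particle analogue of the paper's Lemma \ref{l2.3} --- must use the full strength of $(V_i)$, in particular $|V_{x}|\rightarrow 0$, which your sketch never invokes. Indeed, if $\min_{t}|x_{j}(t)|\rightarrow\infty$ along a PS sequence (and if any escape occurs the whole loop escapes, since the oscillation satisfies $\|x_{j}-\bar{x}_{j}\|_{\infty}\leq C\|\dot{x}_{j}\|_{L^{2}}$, which is bounded on sublevel sets), then testing $f^{\prime}(x_{j})$ against $x_{j}-\bar{x}_{j}$ gives
\begin{equation*}
\|\dot{x}_{j}\|_{L^{2}}^{2}=f^{\prime}(x_{j})[x_{j}-\bar{x}_{j}]+\int_{0}^{T}V_{x}(t,x_{j})\cdot(x_{j}-\bar{x}_{j})\,dt\rightarrow 0,
\end{equation*}
hence $f(x_{j})\rightarrow 0$; so (PS) can fail only at the level $0$ and holds on $\{f\geq\lambda_{0}\}$ for every $\lambda_{0}>0$, which is precisely hypothesis 4 of Lemma \ref{l2.4}. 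Without the gradient decay this conclusion is unavailable: a potential with $|V|\rightarrow 0$ but oscillating, non-decaying gradient at infinity can produce genuine ``critical points at infinity'' at positive levels, so the missing step is not merely expository.
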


Motivated by Gordon (\cite{13}) and Ambrosetti-Coti Zelati (\cite{1},\cite{2}),
Jiang M. Y. (\cite{15}) continues the application of Morse theory to arrive at his theorem
which improves the condition $(V_i)$:

\begin{theorem}\label{th1.3}
 Let $\Omega$ be an open subset in $R^n$ with
compact complement $C=R^n\backslash\Omega$, $n\geq 2$. Assume $V\in
C^2([0,2\pi ]\times\Omega ,R)$, $V(t+2\pi ,x)=V(t,x)$, and
\begin{enumerate}
\item[$(A_1)$.] $\exists R_0$ such that
$$\rm{sup}\{|V(t,x)|+|V^{\prime}_x(t,x)|\ |(t,x)\in [0,2\pi
]\times (R^n\backslash B_{R_0})\}<+\infty$$
\item[$(A_2)$.] $V$ satisfies Gordon's strong force condition.
\end{enumerate}
Then (\ref{1.1}) has infinitely many $2\pi$-periodic solutions.
\end{theorem}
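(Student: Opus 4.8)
The plan is to realise the $2\pi$-periodic solutions as critical points of the Lagrangian action and to extract infinitely many of them from the topology of the loop space of $\Omega$, in the spirit of the Morse / Lusternik--Schnirelmann theory used for Theorems \ref{th1.1}--\ref{th1.2}. First I would fix the variational framework. Let $\Lambda = W^{1,2}(S^1,\Omega)$ be the space of $2\pi$-periodic $W^{1,2}$ loops with values in $\Omega$, where $S^1$ is the circle of circumference $2\pi$; this is an open submanifold of the Hilbert manifold $W^{1,2}(S^1,R^n)$, and on it the action
$$f(x)=\int_0^{2\pi}\Big(\tfrac12|\dot x(t)|^2-V(t,x(t))\Big)\,dt$$
is well defined and, since $V\in C^2$, of class $C^1$; its critical points are exactly the classical non-collision $2\pi$-periodic solutions of (\ref{1.1}).

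Next I would establish the compactness properties. The decisive point is that Gordon's strong force hypothesis $(A_2)$ forces $f(x_m)\to+\infty$ whenever a sequence $x_m$ approaches the singular set $C$: along such loops the primitive $U$ of the strong-force condition yields a lower bound $\int_0^{2\pi}|\nabla U(x_m)|^2\,dt$ that diverges and dominates the action, so every sublevel set of $f$ stays a positive distance from $C$ and no Palais--Smale sequence can collide. Hypothesis $(A_1)$, giving uniform bounds on $V$ and $V'_x$ outside the ball $B_{R_0}$, then controls the functional where the loops are large. Combining the two I would verify that $f$ satisfies the Palais--Smale condition on $\Lambda$, which is what makes minimax theory applicable.

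I would then bring in the topology. Since $C$ is a nonempty compact subset of $R^n$ with $n\ge 2$, $\Omega=R^n\setminus C$ is not contractible: a large sphere enclosing $C$ is nontrivial in $H_{n-1}(\Omega)$ by Alexander duality. For $n=2$ this already makes $\pi_1(\Omega)$ infinite, and one obtains infinitely many solutions by minimising $f$ in each free homotopy class, exactly as in the Gordon-type argument. For $n\ge 3$ the essential input is that the free loop space $\Lambda$ of such a non-contractible finite-dimensional space carries nontrivial homology in arbitrarily high degrees (Vigué-Poirrier--Sullivan / Gromov), so that its Lusternik--Schnirelmann category is infinite. This furnishes a sequence of nontrivial minimax values
$$c_k=\inf_{A\in\mathcal F_k}\ \sup_{x\in A}f(x),\qquad k=1,2,\dots,$$
where $\mathcal F_k$ consists of subsets of $\Lambda$ of topological complexity at least $k$. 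Each $c_k$ is finite, because the relevant classes can be represented by loops staying in a bounded region where the action is finite; by the Palais--Smale condition each $c_k$ is a critical value, and the multiplicity principle of Lusternik--Schnirelmann theory, together with an estimate showing that sublevel sets $\{f\le c\}$ can carry topology only up to bounded complexity, gives $c_k\to+\infty$ and hence infinitely many distinct $2\pi$-periodic solutions.

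The main obstacle I anticipate is the lack of coercivity, i.e. the ``problem at infinity.'' Because $(A_1)$ merely bounds $V$ and $V'_x$ outside $B_{R_0}$ rather than forcing them to vanish, a Palais--Smale sequence could a priori drift to infinity, and the corresponding minimax level could fail to be attained by a genuine solution; constant loops far from $C$ are near-critical for the free-particle action. Controlling this non-compactness---showing that the nontrivial topology detected by $c_k$ cannot escape to infinity and that the deformation can be carried out away from both $C$ and the region $\{|x|\ge R_0\}$---is the technical heart of the proof and the step that genuinely uses $(A_1)$ beyond mere regularity.
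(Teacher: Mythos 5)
First, a point of context: the paper itself contains no proof of Theorem \ref{th1.3}; it is quoted as Jiang's result \cite{15}, obtained (as the paper notes) by Morse theory, and the closest machinery actually developed in the paper is the Ljusternik--Schnirelmann framework of Section 2 (Majer's abstract theorem, Lemma \ref{l2.4}, together with the \emph{local} Palais--Smale condition of Lemma \ref{l2.3}). Measured against either of these, your skeleton --- loop space $\Lambda=W^{1,2}(S^1,\Omega)$, strong force forcing $f\to+\infty$ at the collision boundary, infinite topological complexity of $\Lambda$, finite complexity of sublevel sets --- is the right one, but it has a genuine gap at exactly the point the paper calls ``the key difficulty.''

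The gap is your sentence ``Combining the two I would verify that $f$ satisfies the Palais--Smale condition on $\Lambda$.'' Under $(A_1)$ this is false, not merely delicate, and your own closing paragraph concedes the problem without supplying a mechanism to fix it. Concretely: take any admissible $V$ which is \emph{constant} outside $B_{R_0}$ (this is compatible with $(A_1)$ and with the strong force condition near $C$); then every constant loop $x\equiv\xi$ with $|\xi|>R_0$ is a critical point at the same level, and a sequence of such loops with $|\xi_m|\to\infty$ is a Palais--Smale sequence with no convergent subsequence. So global PS cannot be ``verified,'' and the plain L--S multiplicity principle you invoke is not applicable as stated. The known resolution --- which is what Jiang \cite{15}, Majer \cite{16}, and the paper's Lemma \ref{l2.3} all implement --- is a \emph{local} PS condition: there is an explicit threshold $\lambda_0=\lambda_0(M)$, where $M$ is the bound in $(A_1)$, such that PS holds at every level $c\geq\lambda_0$. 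The mechanism is quantitative: if a PS sequence $x_m$ eventually lies entirely outside $B_{R_0}$, test $f'(x_m)$ against the mean-zero part $x_m-\bar{x}_m$ and use Wirtinger's inequality to get $\|\dot{x}_m\|_{L^2}\leq M\sqrt{2\pi}+o(1)$, whence $f(x_m)\leq \pi M^2+2\pi M+o(1)$; thus escape to infinity is confined to levels $\leq\lambda_0:=\pi M^2+2\pi M$, while at levels above $\lambda_0$ some point of each loop stays in $B_{R_0}$ and the usual estimates give $C^0$ bounds and compactness. One then needs an abstract theorem engineered for exactly this situation --- Majer's Lemma \ref{l2.4}: infinite category of $\Lambda$, blow-up of $f$ on $\partial\Lambda$, finite category of sublevel sets, and PS only on $\{f\geq\lambda_0\}$ together yield an unbounded sequence of critical values; since the minimax values tend to $+\infty$, all but finitely many of them lie above $\lambda_0$ and are therefore attained. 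Without this substitution of ``local PS plus Majer's theorem'' for ``global PS plus standard L--S theory,'' your argument does not close.
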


 Using Ljusternik-Schnirelman theory, Majer\cite{16} got the following
 result which improves the condition $(A_1)$:
\begin{theorem}\label{th1.4}
Assume  $W\in C^1([0,T ]\times (R^N
\backslash 0),R)$ satisfies
\begin{enumerate}
\item[(i).] $W(t+T,x)=W(t,x).$
\item[(ii).] $\exists c\in R,\theta<2,r>0$, such that
$$W(t,x)\leq c|x|^{\theta},\ \ W^{\prime}(t,x)x-2W(t,x)\leq
c|x|^{\theta}, \ \ \forall |x|>r, \forall t>0. $$
\item[(iii).] If $a<(\frac{\pi}{T})^2$
\end{enumerate}
Then
$$ \ddot u+au+W^{\prime}(t,u)=0$$
has infinitely many $T$-periodic solutions.
\end{theorem}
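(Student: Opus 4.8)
The plan is to recast the problem variationally on the space of loops that avoid the singularity and to produce infinitely many critical values by a Ljusternik--Schnirelmann minimax tuned to the topology of that space. First I would work on the open subset
\[
\Lambda=\{u\in H^1_T:\ u(t)\neq 0\ \ \forall t\},\qquad H^1_T=W^{1,2}(R/TR,\,R^N),
\]
and consider the action
\[
f(u)=\int_0^T\Big(\tfrac12|\dot u|^2-\tfrac{a}{2}|u|^2-W(t,u)\Big)\,dt .
\]
Using periodicity (i) one checks that $f\in C^1(\Lambda,R)$ and that its critical points are exactly the non-collision $T$-periodic solutions of $\ddot u+au+W'(t,u)=0$; since $0$ is a genuine singularity (as in the Introduction we assume the strong-force condition at the origin), $\Lambda$ is a proper open subset of $H^1_T$ and the whole scheme must stay inside it.

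The core of the argument is the Palais--Smale condition, and here two different mechanisms must cooperate. Near the singular set, the strong-force hypothesis forces $f(u)\to+\infty$ as $\mathrm{dist}(u,\partial\Lambda)\to 0$, exactly as in Gordon's Theorem~\ref{th1.1}, so a PS sequence at a finite level stays uniformly bounded away from collisions. At infinity, I would split $u=\bar u+\tilde u$ into its mean and its oscillating (mean-zero) part: by Wirtinger's inequality condition (iii), $a<(\pi/T)^2\le(2\pi/T)^2$, makes the quadratic form $\int(\tfrac12|\dot{\tilde u}|^2-\tfrac a2|\tilde u|^2)$ coercive on the oscillating part, while the subquadratic bound $W(t,x)\le c|x|^\theta$ and the one-sided Ambrosetti--Rabinowitz-type inequality $W'(t,x)x-2W(t,x)\le c|x|^\theta$ (with $\theta<2$) are precisely what is needed to control the nonlinear terms and to bound PS sequences. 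The compact embedding $H^1_T\hookrightarrow C^0$ then extracts a subsequence converging in $\Lambda$.

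For the topological input I would exploit that $R^N\setminus 0$ deformation retracts onto $S^{N-1}$, so the component structure and homology of $\Lambda$ mirror those of the free loop space of $S^{N-1}$. When $N=2$ this loop space has infinitely many connected components indexed by winding number in $\mathbb Z$; on each nonzero class $f$ is again coercive (a loop encircling $0$ with large size is forced to have large $\int|\dot u|^2$, compensating the term $-\tfrac a2\int|u|^2$ even when $a>0$), so minimizing $f$ class by class yields infinitely many solutions with actions tending to $+\infty$. When $N\ge 3$ the loop space is connected but carries nontrivial cohomology in arbitrarily high degrees, hence unbounded LS category; I would then set $c_k=\inf_{\operatorname{cat}(A)\ge k}\sup_{u\in A}f(u)$ over the associated families and invoke the deformation lemma, legitimate by the PS condition above, to obtain an increasing unbounded sequence of critical values $c_k\to+\infty$, each furnishing a critical point.

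Collecting these, each $c_k$ gives a non-collision $T$-periodic solution and the distinct values yield infinitely many geometrically distinct ones. The hard part will be the compactness analysis, where the strong force must rule out collisions and (ii)--(iii) must rule out escape to infinity simultaneously; the subtlest point is that for $0<a<(\pi/T)^2$ the functional is unbounded below along the finite-dimensional space of constants, so the minimax families have to be confined to the topologically nontrivial, coercive part of $\Lambda$ in order not to detect these spurious directions. The second essential ingredient, the unboundedness of the category (or cup-length) of the loop space of $S^{N-1}$, is what guarantees that the sequence of critical levels is genuinely infinite.
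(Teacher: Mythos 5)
Theorem \ref{th1.4} is not proved in this paper at all---it is quoted as background from Majer \cite{16}---so the only basis for comparison is the machinery the paper imports from that source, namely the abstract critical point theorem recorded as Lemma \ref{l2.4} and the way it is deployed in the proof of Theorem \ref{th1.8}. Your variational set-up (the open set $\Lambda$ of non-vanishing loops, the action $f$, strong-force blow-up at $\partial\Lambda$, infinite category of $\Lambda$) agrees with that framework, but your compactness step has a genuine gap. When $a>0$ (permitted by (iii)) and $W$ stays bounded at infinity (permitted by (ii), which is only a one-sided bound), $f$ is unbounded below along constant loops and the Palais-Smale condition fails at low levels; consequently the global minimax $c_k=\inf_{\mathrm{cat}(A)\ge k}\sup_A f$ cannot be run with ``the deformation lemma, legitimate by the PS condition above,'' because there is no global PS condition to invoke. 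You notice this yourself in your final paragraph, but the proposed cure---confining the minimax families to the ``topologically nontrivial, coercive part of $\Lambda$''---is exactly the unresolved point, and for $N\ge 3$ it cannot be implemented through the topology of individual loops at all, since $R^N\setminus\{0\}$ is simply connected and no single loop is constrained by a winding class. What Majer actually proves (and what Lemma \ref{l2.4} encodes; compare Lemmas \ref{l2.6}--\ref{l2.8} for the $N$-body analogue) consists of two ingredients you never supply: (a) PS holds only on $\{f\ge\lambda_0\}$ for a suitable threshold $\lambda_0$ (a \emph{local} PS condition), and (b) every sublevel set $\{f\le K\}$ has \emph{finite} category in $\Lambda$. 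Ingredient (b) is the pivot: a set whose category exceeds $\mathrm{cat}_{\Lambda}(\{f\le K\})$ cannot be contained in $\{f\le K\}$, hence $c_k\ge K$ for all large $k$; therefore $c_k\to+\infty$, the relevant levels eventually lie in the region where PS does hold, and one obtains an unbounded sequence of critical values. Without (b), nothing prevents your $c_k$ from sitting at (or drifting to) levels where PS fails.

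A second, more local error concerns the constant in (iii): you attribute $a<(\frac{\pi}{T})^2$ to Wirtinger's inequality on the mean-zero part, remarking $a<(\frac{\pi}{T})^2\le(\frac{2\pi}{T})^2$. But in $\Lambda$ the constant loops are not excluded (unlike the space $E$ used for Theorem \ref{th1.8}, whose symmetry $x(t+T/2)=-x(t)$ kills them), so Wirtinger applied to $\tilde u$ says nothing about $\bar u$ and cannot be the source of this constant. The actual role of $(\frac{\pi}{T})^2$ is as the sharp constant in the Poincar\'e inequality $\int_0^T|u|^2\,dt\le(\frac{T}{\pi})^2\int_0^T|\dot u|^2\,dt$, valid for $T$-periodic $u$ vanishing at some point (view $u$ on an interval of length $T$ with Dirichlet conditions); it also holds componentwise for planar loops winding around the origin, since each scalar component of such a loop vanishes somewhere. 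This inequality is what keeps $\frac12\int_0^T(|\dot u|^2-a|u|^2)\,dt$ nonnegative, indeed coercive, on near-collision loops, so that the strong-force divergence survives the term $-\frac a2\int_0^T|u|^2\,dt$ and yields condition 2 of Lemma \ref{l2.4}; it is also precisely what would make your $N=2$ winding-class coercivity claim rigorous, which you assert but do not prove. As written, your argument establishes neither the boundary blow-up in the presence of $a>0$ nor that coercivity, and with the constants you actually use it could at best give the theorem for a strictly smaller range of $a$.
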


For 3-body type problem, Bahri-Rabinowitz (\cite{3}) used Morse theory
to prove:

\begin{theorem}\label{th1.5}
{\bf(Bahri-Rabinowitz)} Let
$V(q)=\frac{1}{2}\sum\limits_{1\leq i\neq j\leq
3}V_{ij}(q_i-q_j)$. Assume $V_{ij}$ satisfies
\begin{enumerate}
\item[($V_1$).] $ V_{ij}\in C^2(R^l\backslash\{0\},R).$
\item[($V_2$).] $ V_{ij}<0.$
\item[($V_3$).] $V_{ij}(q)$, $V^{\prime}_{ij}(q)\rightarrow 0$ as
$|q|\rightarrow\infty.$
\item[($V_4$).] $ V_{ij}(q)\rightarrow -\infty$ as $q\rightarrow 0.$
\item[($V_5$).] For $\forall M>0$, $\exists R>0$, s.t.
$$V^{\prime}_{ij}(q)\cdot q>M|V_{ij}(q)|,\ \ |q|>R.$$
\item[($V_6$).] $\exists U_{ij}\in C^1(R^l\backslash\{0\},R)$, s.t.
$$U_{ij}(q)\rightarrow\infty\ \rm{as} \ \ q\rightarrow 0,\ \rm{and}\
-V_{ij}\geq |U^{\prime}_{ij}|^2.$$
\end{enumerate}
 Then for any given $T>0$,
\begin{equation}\label{1.2}
\ddot{q}_i+\frac{\partial V(q)}{\partial q_i}=0
\end{equation}
has infinitely many T-periodic noncollision solutions.
\end{theorem}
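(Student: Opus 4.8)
The plan is to recast \eqref{1.2} variationally. On the Hilbert space of $T$-periodic loops $q=(q_1,q_2,q_3)\in H^1(\R/T\R,(\R^l)^3)$ I would introduce the Lagrangian action
\[
f(q)=\int_0^T\Big(\tfrac12|\dot q|^2-V(q)\Big)\,dt,
\]
and restrict it to the open collision-free loop space $\Lambda=\{q:\ q_i(t)\neq q_j(t)\ \forall t,\ i\neq j\}$. Since $V$ depends only on the differences $q_i-q_j$, the system is translation invariant; because the total momentum is conserved and the center of mass of a periodic solution stays fixed, I may look for solutions with $q_1+q_2+q_3\equiv 0$ and restrict $f$ to that subspace, which removes the degeneracy. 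A standard computation shows that critical points of $f$ on $\Lambda$ are exactly the $T$-periodic noncollision solutions of \eqref{1.2}, and by $(V_2)$ one has $-V>0$, so $f>0$ is bounded below.

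First I would exploit the topology of the collision-free configuration space $X=\{(\xi_1,\xi_2,\xi_3):\xi_i\neq\xi_j,\ \sum_i\xi_i=0\}$, whose free loop space carries infinitely many nontrivial Ljusternik-Schnirelmann classes. This yields a sequence of minimax values
\[
c_k=\inf_{\operatorname{cat}_\Lambda(A)\ge k}\ \sup_{q\in A}f(q),\qquad c_1\le c_2\le\cdots,
\]
each a candidate critical value. I would then invoke the strong-force hypothesis $(V_6)$ via Gordon's lemma: along any sequence in $\Lambda$ approaching the collision set the term $-\int_0^T V(q)$ diverges, so $f(q)\to+\infty$ near $\partial\Lambda$. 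Consequently the negative gradient flow never reaches a collision, the sublevel sets $\{f\le c\}$ stay at positive distance from $\partial\Lambda$, and the deformation lemma of L-S theory can be carried out entirely inside $\Lambda$; this is where $(V_1)$ (the $C^2$ regularity needed for a smooth flow) and $(V_4)$ (the singularity at $0$) enter.

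The hard part, and the genuine content of Bahri-Rabinowitz, is compactness. Because $(V_3)$ forces $V_{ij},V'_{ij}\to0$ at infinity, $f$ is not coercive: a loop whose bodies drift far apart and move slowly has $f\to0^+$, so naive minimization is trivial and the Palais-Smale condition can fail. The mechanism of failure is \emph{splitting} -- a $(PS)$ sequence may lose one body to infinity, leaving in the limit a solution of a two-body subsystem (and, with two bodies lost, the trivial one-body problem contributing $0$). I would use $(V_5)$, namely $V'_{ij}(q)\cdot q>M|V_{ij}(q)|$ for $|q|>R$ with $M$ arbitrary, both to keep $(PS)$ sequences bounded and to identify the discrete set of \emph{critical values at infinity}, i.e. the action levels of these subsystems, and then prove that $(PS)_c$ holds for every $c$ outside that set.

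The concluding step is to reconcile the minimax scheme with these bad levels. A comparison of the topology of $\Lambda$ with that of the loop spaces of the two-body subsystems shows that infinitely many of the L-S classes cannot be deformed off to infinity, since they genuinely involve all three bodies linking one another; the corresponding $c_k$ are therefore forced to lie in the regime where $(PS)_c$ holds, so each is a true critical value, and together they produce infinitely many geometrically distinct $T$-periodic noncollision solutions with $c_k\to\infty$. I expect this final reconciliation -- controlling the escape to infinity well enough to guarantee that the essential topology survives in the limit -- to be the principal obstacle; the rest is a now-standard combination of Gordon's lemma and minimax. (Bahri-Rabinowitz implement the count with Morse theory rather than L-S category, but the compactness analysis at infinity is the same difficulty in either framework.)
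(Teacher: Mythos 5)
First, a point of calibration: this paper does not prove Theorem \ref{th1.5} at all. It is quoted as background from Bahri--Rabinowitz \cite{3}, with only the remark that the original proof uses Morse theory. So your proposal can only be judged against the original argument, not against anything internal to the paper.

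Your outline reproduces the correct skeleton of that argument: the action functional on the collision-free loop space $\Lambda$, Gordon's strong-force lemma (via $(V_6)$) forcing $f\to+\infty$ at $\partial\Lambda$ so the deformation stays collision-free, the infinite topology of $\Lambda$, and the diagnosis that Palais--Smale fails by ``splitting,'' with bodies escaping to infinity and leaving two-body or trivial subsystems in the limit. But there is a genuine gap exactly where you place ``the principal obstacle'': the claim that infinitely many Ljusternik--Schnirelmann classes cannot be deformed off to infinity is asserted in a single sentence, and that claim \emph{is} the theorem. In \cite{3} this step is the bulk of the work: a quantitative, homological comparison between the growth of the topology (Betti numbers) of $\Lambda$ and the strictly smaller topology that the sets at infinity --- built from the two-body subsystem solutions, with $(V_5)$ used to decouple escaping clusters --- can carry. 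Your substitution of L--S category for Morse homology is not innocuous here: category gives a count of critical points below a level, but the argument at infinity needs to measure how much topology each bad level can absorb as $c_k\to\infty$; without a relative-category or homological version of that comparison, nothing prevents every one of your minimax values $c_k$ from coinciding with a critical value at infinity, in which case the scheme produces no genuine solutions. Citing $(V_5)$ as the relevant hypothesis is right, but the deduction from $(V_5)$ to ``the essential topology survives in the limit'' is precisely what must be proved, and your sketch leaves it as a black box.
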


We say that a function $X(t)=(x_1(t),\cdots,x_N(t))\in
C^2(R,(R^k)^N)$ is a  non-collision $T$-periodic solution of (\ref{1.2})
if $X(t)$ satisfies $x_i(t)\neq
x_j(t)$ for all $i\neq j$ and $t\in R$ ,and satisfies the equation $(1.2)$ and is indeed $T$ periodic.

Majer-Terracini (\cite{17}) generalized the result of Bahri-Rabinowitz
to $n$-body type problems:
\begin{equation}\label{1.3}
\ddot{x}_i(t)+\nabla_{x_i}V(t,x_1(t),\cdots,x_N(t))=0,\ \ \ \
x_i(t)\in R^k,\ \ \ \ i=1,\cdots,N.
\end{equation}

They proved the following theorem:

\begin{theorem}\label{th1.6}
 Assume $k\geq 3$, and $V_{ij}\in C^1((R^k\backslash 0)\times
R,R)$ are $T$-periodic in $t$, $V$ satisfies
\begin{enumerate}
\item[($V_1$).] $ V_{ij}(t,x)=V_{ji}(t,-x), \forall x\in R^k\backslash\{0\}.$
\item[($V_2$).] $ V_{ij}(t,x)\leq 0, \forall x\in R^k\backslash\{0\}.$
\item[($V_3$).] $ V_{ij}(t,\xi )\rightarrow -\infty$ uniformly in t as $|\xi
|\rightarrow 0$, for all $1\leq i\neq j\leq N,$ and
$V_{ij}$ satisfies Gordon's strong force condition.
\item[($V_4$).] $\exists\rho >0$,
$\exists\theta\in\left[0,\frac{\pi}{2}\right)$ s.t. any $(\nabla
V_{ij}(t,x),x)\leq\theta , \forall x,|x|>\rho$.
\end{enumerate}
Then (1.3) has at least one $T$-periodic non-collision solution.
\end{theorem}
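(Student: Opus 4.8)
The plan is to realize the $T$-periodic non-collision solutions of (\ref{1.3}) as critical points of the Lagrangian action
\[
f(X)=\int_0^T\Big[\tfrac12\sum_{i=1}^N|\dot x_i|^2-V(t,X)\Big]\,dt,
\qquad
V(t,X)=\tfrac12\sum_{1\le i\ne j\le N}V_{ij}(t,x_i-x_j),
\]
defined on the open set
\[
\Lambda=\Big\{X=(x_1,\dots,x_N)\in H^1(R/TZ,(R^k)^N):x_i(t)\ne x_j(t)\ \text{for all } i\ne j,\ t\Big\}.
\]
Because each $V_{ij}$ depends only on the difference $x_i-x_j$ one has $\sum_i\nabla_{x_i}V\equiv0$ (Newton's third law, which is exactly the content of the symmetry $(V_1)$), so summing (\ref{1.3}) over $i$ shows that the center of mass $\frac1N\sum_i x_i$ has zero acceleration; for a $T$-periodic motion it must therefore be constant, and I may normalize it to $0$. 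This removes the translation invariance that would otherwise make $f$ degenerate. Note also that $(V_2)$ gives $-V\ge0$, so $f\ge0$ is bounded below and its kinetic part alone controls $\sum_i\|\dot x_i\|_{L^2}$.

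Next I would dispose of the singular set by means of the strong force hypothesis in $(V_3)$. By Gordon's lemma (the device underlying Theorem~\ref{th1.1}), every $H^1$ loop that meets a collision $\{x_i=x_j\}$ has $f=+\infty$; hence each sublevel set $\{f\le c\}$ consists of genuinely collision-free loops and, on any bounded range of the action, stays a uniform positive distance from the collision set. Consequently $f$ is of class $C^1$ on $\Lambda$ and weak limits of bounded-action sequences remain collision-free. This is where the dimension hypothesis $k\ge3$ enters: the collision set has codimension $k\ge3$ in configuration space, so after fixing the center of mass the collision-free configuration manifold $\mathcal C$ is simply connected yet carries nontrivial topology in degree $k-1$ (for instance $H^{k-1}(\mathcal C)\ne0$, detected by the maps $x\mapsto(x_i-x_j)/|x_i-x_j|$). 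Through the evaluation fibration $\Omega\mathcal C\to\Lambda\to\mathcal C$ this forces the free loop space $\Lambda$ to be non-contractible, with nontrivial cohomology in positive degrees, which provides the minimax classes required by Ljusternik--Schnirelmann theory.

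The step I expect to be the main obstacle is the compactness analysis, i.e. a Palais--Smale condition. For a sequence $X_n\in\Lambda$ with $f(X_n)\le C$ and $f'(X_n)\to0$, the action bound at once bounds the kinetic energy, hence (after fixing the center of mass and applying Poincar\'e's inequality to the zero-mean parts) the oscillations of the $x_i$; collisions are ruled out by the strong-force estimate above. The genuine difficulty is escape to infinity: since $V$ does not blow up at infinity, the internal configuration could spread at no kinetic cost. To control the mutual distances I would use the sub-quadratic hypothesis $(V_4)$ through the virial identity
\[
\langle f'(X),X\rangle=\int_0^T\sum_{i=1}^N|\dot x_i|^2\,dt
-\int_0^T\sum_{1\le i<j\le N}\big(\nabla V_{ij}(t,x_i-x_j),\,x_i-x_j\big)\,dt,
\]
whose second integrand is bounded above by $(V_4)$ whenever $|x_i-x_j|>\rho$. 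Because $V$ stays bounded at infinity this argument yields only a \emph{local} Palais--Smale condition below the threshold $c^{*}$ at which the system may split into non-interacting subsystems receding from one another; one then extracts a convergent subsequence from weak compactness, weak lower semicontinuity of the kinetic term, and the uniform control away from the avoided singularity, obtaining strong convergence to a critical point.

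Finally I would run the Ljusternik--Schnirelmann minimax. Taking a homologically nontrivial family $\mathcal A$ in $\Lambda$ generated by the degree-$(k-1)$ topology of $\mathcal C$, I would set $c=\inf_{A\in\mathcal A}\max_{X\in A}f(X)$ and verify, by evaluating $f$ on an explicit test family concentrated in a bounded collision-free region (so that the bodies genuinely link rather than drift apart), that $0<c<c^{*}$. The local Palais--Smale condition then makes $c$ a critical value. The corresponding critical point lies in a finite sublevel set, hence is non-collision; it is non-constant because it realizes a nontrivial topological class; and by elliptic bootstrap it is a classical $T$-periodic solution of (\ref{1.3}). The delicate points to watch are the precise constant in $(V_4)$ (the restriction on $\theta$ is what keeps the growth sub-quadratic and lets the virial estimate close) and the strict inequality $c<c^{*}$, which is what prevents the minimizing family from being pushed to infinity.
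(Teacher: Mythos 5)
First, a point of fact: this paper does not prove Theorem \ref{th1.6} at all. It is a background result, quoted from Majer--Terracini \cite{17}, so the only proof to compare yours against is the one in \cite{17}. Your outline in fact reproduces the architecture of that proof: the action functional on collision-free $H^1$ loops, removal of translation invariance by fixing the center of mass, Gordon's strong-force argument forcing $f\to+\infty$ at the collision set, the high connectivity and degree-$(k-1)$ cohomology of the configuration space for $k\geq 3$ as the source of minimax classes, and a Palais--Smale condition that is only \emph{local} because bounded-action sequences can split into clusters of bodies receding from one another. All of this is sound in spirit and is the right skeleton.

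The genuine gap is the step you compress into one sentence: ``verify \dots\ that $0<c<c^{*}$.'' This inequality is the heart of Majer--Terracini's paper, and it cannot be obtained by evaluating $f$ on an explicit test family. A test family gives only an upper bound for $c$; the threshold $c^{*}$ is not an explicit number but is defined implicitly as the lowest level at which Palais--Smale sequences lose compactness, and those levels are (sums of) minimax/critical values of the subsystem problems into which the $N$ bodies may split. Comparing $c$ against them requires relating the minimax levels of the full problem to those of every possible cluster decomposition; in \cite{17} this is done by induction on the number of bodies together with a classification of the ``critical sequences at infinity,'' and it is exactly here that hypothesis $(V_4)$ is used. (Note also that $(V_4)$ is garbled in this paper's transcription: the range $\theta\in[0,\pi/2)$ and the stray word ``any'' indicate that the original hypothesis bounds the \emph{angle} between $\nabla V_{ij}(t,x)$ and $x$ by $\theta<\pi/2$ for $|x|>\rho$, not the inner product, so the virial estimate you build on the literal reading may not even be the intended assumption.) Without the inductive exclusion of levels at infinity, your minimax value may coincide with such a level, the local Palais--Smale condition is then unavailable precisely at level $c$, and the scheme produces a ``critical point at infinity'' --- clusters drifting apart --- rather than a solution. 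So you have correctly located the difficulty, but the decisive idea that resolves it is missing.
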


For symmetrical potentials,Fadell-Husseini[11],Zhang-Zhou[25] proved that
\begin{theorem}\label{th1.7}
 We assume $V_{ij}$ satisfies the following conditions:
\begin{enumerate}
\item[(V1).] $V(t,x)=\frac{1}{2}\sum\limits_{1\leq
i\neq j\leq N}V_{ij}(t,x_i-x_j)$.

\item[(V2).] $V_{ij}\in C^1(R\times (R^k-\{0\});R)$, for all $1\leq i\neq
j\leq N$.

\item[(V3).] $V_{ij}(t,\xi )\rightarrow -\infty$ uniformly in t as $|\xi
|\rightarrow 0$, for all $1\leq i\neq j\leq N.$

\item[(V4).]  $V_{ij}(t,\xi )\leq 0, 1\leq i\neq j\leq N,\xi\neq 0$.

\item[(V5).] the strong force condition (see [13]) holds for $V_{ij}$.

\item[(V6).] $V_{ij}(t+T/2,-\xi)=V_{ij}(t,\xi)$.
\end{enumerate}
Then there exist unbounded sequences of critical values for the Lagrangian action
corresponding to non-collision periodic solutions for (\ref{1.3}).
\end{theorem}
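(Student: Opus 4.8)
The plan is to realise the $T$-periodic solutions of (\ref{1.3}) as critical points of the Lagrangian action
$$
f(X)=\int_0^T\Big[\tfrac12\sum_{i=1}^N|\dot x_i|^2-V(t,X(t))\Big]\,dt,\qquad X=(x_1,\dots,x_N),
$$
on the Hilbert manifold of non-collision loops inside the space of $T$-periodic maps of Sobolev class $W^{1,2}$ into $(R^k)^N$. The decisive structural input is the symmetry (V6): I would first verify that the involution $(\tau X)(t)=-X(t+T/2)$ is an isometry of the loop space under which $f$ is invariant. Indeed, a change of variables returns the kinetic term to itself, while (V6) rewritten as $V_{ij}(t,-\xi)=V_{ij}(t-T/2,\xi)$, together with the $T$-periodicity it forces, converts the potential term back into itself. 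I would then restrict $f$ to the fixed-point space $E=\{X:\,X(t)=-X(t+T/2)\}$ and invoke the Palais principle of symmetric criticality, so that critical points of $f|_E$ are genuine solutions of (\ref{1.3}); this is exactly the step in which (V6) is indispensable.

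Next I would exploit that every $X\in E$ has zero mean, so the Wirtinger (Poincar\'e) inequality makes $\|\dot X\|_{L^2}$ an equivalent norm on $E$. Combined with (V4), which gives $-V\ge 0$, this yields $f(X)\ge\tfrac12\|\dot X\|_{L^2}^2$, so $f$ is bounded below and coercive on $E$; note that here the symmetry plays the role that a growth hypothesis at infinity would otherwise play. The collision set $\{x_i=x_j\}$ is controlled by (V3) and the strong-force condition (V5): following Gordon, the strong force makes the action blow up along any loop degenerating to a collision, forcing $f\to+\infty$ there. From this I would deduce that the sublevel sets of $f$ are bounded away from the collision set and that $f$ satisfies the Palais--Smale condition on the open manifold $\Lambda=\{X\in E:\,x_i(t)\ne x_j(t)\ \forall\,i\ne j,\ \forall\,t\}$, which is therefore complete for the deformation theory even though it is not closed in $E$.

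With coercivity, boundedness below and (PS) in hand, I would run Ljusternik--Schnirelmann theory on $\Lambda$. The non-collision loop space carries a rich topology: its Ljusternik--Schnirelmann category, equivalently a suitable cup-length or cohomological index in the sense of Fadell--Husseini, is infinite. Setting
$$
c_m=\inf_{\operatorname{cat}_\Lambda(A)\ge m}\ \sup_{X\in A}f(X),
$$
the standard deformation lemma shows each $c_m$ is a critical value, and since $\operatorname{cat}(\Lambda)=\infty$ while $f$ is coercive and satisfies (PS), the classical Ljusternik--Schnirelmann argument forces $c_m\to+\infty$ (a bounded accumulation would produce, at a finite level, a critical set of infinite category, impossible since (PS) plus coercivity make such critical sets compact). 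Each associated critical point lies in $\Lambda$, hence is a non-collision $T$-periodic solution, yielding the asserted unbounded sequence of critical values.

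The main obstacle I anticipate is not the analytic part---coercivity and (PS) follow fairly directly from (V4) and the strong force---but the topological input: proving that $\operatorname{cat}(\Lambda)=\infty$ and, more sharply, that this forces $c_m\to+\infty$. This requires genuine control of the homotopy type of the symmetric non-collision loop space $\Lambda$, namely the antiperiodic free loop space of the punctured configuration space of $N$ points in $R^k$, and it is here that an index theory adapted to the collision singularity, rather than naive category on a non-compact manifold, is needed to make the minimax scheme yield divergent critical values.
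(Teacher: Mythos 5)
Your proposal is correct in outline, and it is worth noting first that the paper never proves Theorem \ref{th1.7} directly: it quotes it from Fadell--Husseini and Zhang--Zhou, and then subsumes it as the special case $V_{ij}\le 0$ of Theorem \ref{th1.8}, whose proof is the content of Section 2. Your architecture matches that machinery closely: the action on the antiperiodic loop space $E=\{X:\,X(t+T/2)=-X(t)\}$, the use of (V6) so that critical points of the restricted functional are genuine solutions (Lemma \ref{l2.1}), the strong-force blow-up of $f$ at the collision set $\Gamma$ (Lemma \ref{l2.2}), and $\mathrm{Cat}(\Delta)=+\infty$ (Lemma \ref{l2.5}). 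Where you genuinely diverge is the mechanism that makes the critical values unbounded. You exploit (V4), i.e.\ $-V\ge 0$, to get coercivity via Wirtinger and a \emph{global} Palais--Smale condition, and then run classical Ljusternik--Schnirelmann theory, deducing $c_m\to+\infty$ from compactness of critical sets. The paper instead applies Majer's abstract theorem (Lemma \ref{l2.4}), whose hypotheses are: blow-up at $\partial\Delta$, a \emph{local} $(PS)_c$ condition only for $c\ge\lambda_0$ (Lemma \ref{l2.3}, the paper's main technical contribution, needed because under the weakened hypothesis $(V4)'$ one cannot argue as you do), and finite category of every sublevel set (Lemma \ref{l2.8}, resting on Zhang--Zhou's Lemma \ref{l2.7} that sets with bounded kinetic energy have finite category in $\Delta$). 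For Theorem \ref{th1.7} your simpler coercivity-based route is legitimate; the paper's route is what permits the relaxation of (V4) to $(V4)'$. One caveat: your parenthetical claim that a bounded sequence $c_m$ would produce ``a critical set of infinite category'' is loose --- if the $c_m$ increase strictly to a finite limit you need the deformation argument near the limit level, and, more importantly, finiteness of each $c_m$ requires sets of arbitrarily large category on which $f$ is bounded (e.g.\ compact ones). You correctly flag this in your final paragraph as the real obstacle; the paper's finite-category-of-sublevel-sets lemma is precisely what closes it, since once $\mathrm{Cat}_{\Delta}(\{f\le K\})<+\infty$ for every $K$, any $m$ exceeding that category forces $c_m>K$, so divergence is immediate without any discussion of how category is carried by compact subsets.
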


In this paper, we consider a relaxation of condition $(V4)$ which required the potentials
to be non-positive, but still maintain the potentials have some growth so that the result in Theorem \ref{th1.7}
still holds. We use Majer's abstract critical point theorem to study the N-body-type problem. The key difficulty is in proving the local Palais-Smale condition, but we are able to secure the following:

 \begin{theorem}\label{th1.8}
Assume $V_{ij}$ satisfies $(V1)-(V3)$ ,$(V5),(V6)$ and

$(V4)'$\ $\exists g>0$, $\theta<2$ ,$r>0$, s.t.

$$V_{ij}(t,\xi)\leq gm_im_j|\xi|^{\theta} ,|\xi |>r.$$
Then there exist unbounded sequences of critical values for the Lagrangian action
corresponding to non-collision periodic solutions for (\ref{1.3}).
\end{theorem}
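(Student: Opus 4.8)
The plan is to obtain the non-collision $T$-periodic solutions as critical points of the Lagrangian action on a space of non-collision loops, and to produce an unbounded sequence of critical values through the Ljusternik--Schnirelmann machinery underlying Theorem \ref{th1.4}. Exploiting $(V6)$, I work in the Hilbert space $E$ of $T$-periodic $H^1$ maps $X=(x_1,\dots,x_N):R\to(R^k)^N$ satisfying $x_i(t+T/2)=-x_i(t)$. Every such $X$ has zero mean, so $\|\dot X\|_{L^2}$ is an equivalent norm on $E$, the embedding $E\hookrightarrow C^0$ is compact, and the translation (centre-of-mass) degeneracy of the problem disappears. On the open subset
$$\Lambda=\{X\in E:\ x_i(t)\neq x_j(t)\ \text{for all}\ i\neq j\ \text{and all}\ t\}$$
of non-collision loops I consider
$$f(X)=\int_0^T\Big[\tfrac12\sum_{i=1}^N|\dot x_i(t)|^2-V(t,X(t))\Big]\,dt,$$
which is of class $C^1$ by $(V2)$. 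A direct computation using $(V6)$ shows that the action is invariant under the involution $\sigma X(t)=-X(t+T/2)$ defined on the full space of $T$-periodic $H^1$ loops, and that $E$ is precisely its fixed-point set; hence, by Palais' principle of symmetric criticality, critical points of $f$ on $\Lambda$ solve (\ref{1.3}) and are free of collisions.

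I would then isolate the two structural facts that keep critical behaviour away from the boundary and from infinity. From Gordon's strong-force condition $(V5)$ together with $(V3)$ I obtain the usual estimate: if a sequence in $\Lambda$ approaches the collision set with $\|\dot X\|_{L^2}$ bounded, then, writing $\xi=x_i-x_j$ for a colliding pair and using $-V_{ij}\geq|U_{ij}'|^2$, a Cauchy--Schwarz argument forces $\int_0^T(-V_{ij}(t,\xi))\,dt\to+\infty$, so $f\to+\infty$. Consequently the sublevel sets $\{f\leq c\}$ are closed (complete) in $E$. The genuinely new point, and the only place where $(V4)'$ is used, is coercivity. Since $V_{ij}(t,\xi)\to-\infty$ as $\xi\to0$, the function $V_{ij}$ is bounded above on $\{|\xi|\leq r\}$, while for $|\xi|>r$ one has $V_{ij}(t,\xi)\leq gm_im_j|\xi|^{\theta}$; combining this with $\|X\|_{C^0}\leq C\|\dot X\|_{L^2}$ gives
$$f(X)\ \geq\ \tfrac12\|\dot X\|_{L^2}^2-C_1\|\dot X\|_{L^2}^{\theta}-C_2 .$$
Because $\theta<2$, the right-hand side tends to $+\infty$ with $\|X\|$, so $f$ is bounded below and coercive on $\Lambda$; note that under the old hypothesis $V_{ij}\leq0$ this was automatic.

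Verifying the (local) Palais--Smale condition is the heart of the matter and the step I expect to be the main obstacle, precisely because one must simultaneously control escape to infinity and approach to the singular set. Given $f(X_n)\to c$ and $f'(X_n)\to0$, coercivity forces $\{X_n\}$ to be bounded in $E$, hence in $C^0$; the strong-force estimate of the previous paragraph then shows that $\{X_n\}$ cannot approach $\partial\Lambda$, so the loops $X_n$ remain in a fixed compact region of configuration space on which $V_{ij}$ and $V_{ij}'$ are bounded. On that region the gradient term is a compact perturbation of the invertible operator associated with $\tfrac12\|\dot X\|_{L^2}^2$, so the relation $\ddot X_n=-\nabla V(t,X_n)+o(1)$ bootstraps $X_n$ into a bounded set of $H^2$ and yields a subsequence converging strongly in $E$ to a limit in $\Lambda$. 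The delicate balance is exactly between the subquadratic bound $(V4)'$, which now merely keeps $\{X_n\}$ bounded rather than making $f$ positive, and the singular strong-force term, which keeps $\{X_n\}$ away from collisions.

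With $f$ bounded below, coercive, satisfying Palais--Smale, and with complete sublevel sets, I would finally invoke Majer's abstract critical point theorem, the mechanism behind Theorem \ref{th1.4}. The topological input---that $\Lambda$, the symmetric non-collision loop space, has infinite Ljusternik--Schnirelmann category---is unchanged from the situation of Theorem \ref{th1.7}, since the passage from $(V4)$ to $(V4)'$ affects only the analytic estimates above. Setting $c_m=\inf_{\mathrm{cat}_\Lambda(A)\geq m}\sup_{X\in A}f(X)$, the standard deformation argument (valid here thanks to Palais--Smale and completeness) shows that each $c_m$ is a critical value. If the $c_m$ were bounded, then a single sublevel set would carry arbitrarily large category while containing only finitely many critical orbits, contradicting the compactness furnished by Palais--Smale and coercivity; hence $c_m\to+\infty$. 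This produces the desired unbounded sequence of critical values, each corresponding to a non-collision $T$-periodic solution of (\ref{1.3}), completing the argument.
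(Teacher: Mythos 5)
Your framework is the same as the paper's: the symmetric loop space $E$ with $x_i(t+T/2)=-x_i(t)$, the open set of non-collision loops (your $\Lambda$, the paper's $\Delta$), the strong-force condition to force $f\to+\infty$ at the collision boundary, and Majer's Ljusternik--Schnirelmann scheme at the end. Your coercivity step is correct and in fact cleaner than the paper's treatment: the global bound $V_{ij}(t,\xi)\leq b+gm_im_j|\xi|^{\theta}$ (with $b$ an upper bound on $\{0<|\xi|\leq r\}$ coming from $(V3)$ and continuity/periodicity, and $(V4)'$ outside), combined with the Wirtinger/Sobolev inequality for zero-mean loops, yields $f(X)\geq c_0\|\dot X\|_{L^2}^{2}-C_1\|\dot X\|_{L^2}^{\theta}-C_2$ in one stroke, whereas the paper's Lemma \ref{l2.3} splits into three cases according to whether pairs stay outside, stay inside, or cross the radius $r$; your inequality also delivers the paper's Lemma \ref{l2.6} (sublevel sets bounded in the kinetic norm). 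Your Palais--Smale verification is at the same level of rigor as the paper's ("strong convergence is standard").

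The genuine gap is in your final step. Majer's theorem (the paper's Lemma \ref{l2.4}) has four hypotheses, and the one you never establish is condition 3: for every $K$, $\mathrm{cat}_{\Delta}(\{q\in\Delta: f(q)\leq K\})<+\infty$. Your substitute argument --- if the $c_m$ were bounded, a single sublevel set would have arbitrarily large category "while containing only finitely many critical orbits, contradicting the compactness furnished by Palais--Smale" --- fails as stated: Palais--Smale makes the critical set inside a sublevel set \emph{compact}, not finite, and a sublevel set of infinite category is not, by itself, in contradiction with Palais--Smale. To extract a contradiction you would need a second-deformation-type argument compressing the entire sublevel set into a small neighborhood of that compact critical set (such a neighborhood has finite category since $\Delta$ is an ANR); in this singular, non-complete setting that deformation machinery is exactly what Majer's condition 3 is designed to substitute for, and you neither state nor prove it. The paper closes this hole with a separate, genuinely topological ingredient: Zhang--Zhou's Lemma \ref{l2.7}, asserting that each bounded piece $D_K=\{X\in\Delta:\ \|\dot X\|_{L^2}\leq K\}$ has finite category in $\Delta$; combined with the kinetic bound on sublevel sets (Lemma \ref{l2.6}, which your coercivity gives) and monotonicity of category, this yields condition 3 (Lemma \ref{l2.8}). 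Note that this fact is not formal: $\mathrm{cat}(\Delta)=+\infty$ while every $D_K$ has finite category, so the infinite topology of $\Delta$ lives "at infinity" in the kinetic norm --- precisely the phenomenon your compactness heuristic overlooks.
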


Notice that the condition $(V4)'$ in our theorem 1.8 is a kind of growth condition which weak the ordinary condition on potentials which need non-positive,we have the following Corollary:

\begin{corollary}\label{c1.1}
 Let $\alpha\geq 2;r_1>0,r_2>r_1;a,g>0,\theta<2$ and
$$V(t,x)=\frac{1}{2}\sum\limits_{1\leq
i\neq j\leq N}V_{ij}(x_i-x_j).$$
$$V_{ij}(\xi)\in C^1(R^k-\{0\},R)$$
satisfies that
$$V_{ij}(\xi)=-am_im_j|\xi|^{-\alpha},|\xi|<r_1;$$
$$V_{ij}(\xi)=gm_im_j|\xi|^{\theta},|\xi|\geq r_2>r_1.$$
Then the assumptions and the result of Theorem \ref{th1.8} holds.
\end{corollary}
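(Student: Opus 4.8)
The statement is a verification result: the plan is to check that the explicitly prescribed potential satisfies hypotheses $(V1)$–$(V3)$, $(V5)$, $(V6)$ and $(V4)'$ of Theorem \ref{th1.8}, and then simply invoke that theorem. Since each $V_{ij}$ here is autonomous, the $T$-periodicity in $t$ is automatic for every $T>0$ and all uniformity-in-$t$ requirements become vacuous, while condition $(V1)$ is built into the prescribed form of $V$.

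First I would dispose of the conditions that are immediate. For $(V2)$, the two prescribed pieces $-am_im_j|\xi|^{-\alpha}$ on $\{|\xi|<r_1\}$ and $gm_im_j|\xi|^{\theta}$ on $\{|\xi|\ge r_2\}$ are each $C^1$ and radial on their domains; since $r_1<r_2$ the annulus $\{r_1\le|\xi|<r_2\}$ lies strictly between them, so I fill it with any radial $C^1$ function matching the prescribed boundary values and first derivatives, which is always possible and yields $V_{ij}\in C^1(R^k\setminus\{0\},R)$. For $(V3)$, since $a,m_i,m_j>0$ and $\alpha\ge 2>0$ we have $V_{ij}(\xi)=-am_im_j|\xi|^{-\alpha}\to-\infty$ as $|\xi|\to 0$. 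For $(V6)$, because each piece depends only on $|\xi|$ and the interpolation is chosen radial, $V_{ij}$ is even, so $V_{ij}(-\xi)=V_{ij}(\xi)$, which for an autonomous potential is exactly $V_{ij}(t+T/2,-\xi)=V_{ij}(t,\xi)$. Finally $(V4)'$ holds with $r=r_2$: for $|\xi|>r_2$ we have $V_{ij}(\xi)=gm_im_j|\xi|^{\theta}$, so the required inequality holds with equality. This is precisely where positivity of the potential at infinity is now permitted, in contrast to the old hypothesis $(V4)$.

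The only substantive step is the strong force condition $(V5)$, which concerns the behaviour near the singularity $\xi=0$, i.e.\ on $\{|\xi|<r_1\}$ where $-V_{ij}(\xi)=am_im_j|\xi|^{-\alpha}$. I would exhibit the auxiliary function $U_{ij}$ explicitly. When $\alpha=2$, take $U_{ij}(\xi)=\sqrt{am_im_j}\,\log|\xi|$, so that $U_{ij}\to-\infty$ as $\xi\to 0$ and $|\nabla U_{ij}|^2=am_im_j|\xi|^{-2}=-V_{ij}(\xi)$, giving $(V5)$ with equality. When $\alpha>2$, set $\beta=(\alpha-2)/2>0$ and $U_{ij}(\xi)=c|\xi|^{-\beta}$ with $c=\sqrt{am_im_j}/\beta$; then $U_{ij}\to+\infty$ as $\xi\to 0$ and $|\nabla U_{ij}|^2=c^2\beta^2|\xi|^{-2\beta-2}=am_im_j|\xi|^{-\alpha}=-V_{ij}(\xi)$, so $(V5)$ again holds. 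The restriction $\alpha\ge 2$ is exactly what makes this construction go through.

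With all hypotheses of Theorem \ref{th1.8} verified, the conclusion, namely the existence of unbounded sequences of critical values of the Lagrangian action corresponding to non-collision periodic solutions of (\ref{1.3}), follows immediately by applying the theorem. I expect no real obstacle here, since the corollary is a concrete instance chosen so that the hypotheses are transparent; the one point deserving care is choosing the interpolation on the annulus $\{r_1\le|\xi|<r_2\}$ to be simultaneously $C^1$ and radial, but since no growth or sign constraint is imposed on that bounded, singularity-free region, such an interpolation always exists.
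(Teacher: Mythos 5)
Your proposal is correct, and it is worth noting that the paper itself supplies no argument at all for Corollary \ref{c1.1} --- it is stated as an immediate instance of Theorem \ref{th1.8} --- so your verification fills in exactly the details the authors left implicit. The checks of $(V1)$--$(V3)$, $(V4)'$ (with $r=r_2$, where the inequality becomes an equality) and the reduction of $T$-periodicity and $(V6)$ to evenness for an autonomous potential all match the intended reading, and your explicit strong-force gauges are the one substantive step: for $\alpha=2$, $U_{ij}(\xi)=\sqrt{am_im_j}\,\log|\xi|$ gives $|\nabla U_{ij}|^2=-V_{ij}$ exactly, and for $\alpha>2$, $U_{ij}(\xi)=\frac{\sqrt{am_im_j}}{\beta}|\xi|^{-\beta}$ with $\beta=(\alpha-2)/2$ does the same; this makes visible why the hypothesis $\alpha\geq 2$ is the precise threshold, which the paper never explains. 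Two small points: first, in the $\alpha>2$ case your $U_{ij}\to+\infty$ matches the Bahri--Rabinowitz form of the condition (as in $(V_6)$ of Theorem \ref{th1.5}), whereas Gordon's formulation quoted in Section 1 asks $U\to-\infty$; replacing $U_{ij}$ by $-U_{ij}$, which leaves $|\nabla U_{ij}|^2$ unchanged, reconciles the two, and you should say so. Second, your insistence that the interpolation on the annulus $\{r_1\leq|\xi|<r_2\}$ be chosen radial (hence even) is not pedantry: the corollary as literally stated does not constrain $V_{ij}$ there, and an odd-in-$\xi$ perturbation on the annulus would violate $(V6)$, so your reading --- that the corollary asserts the conclusion for potentials of this form with an even $C^1$ filling, which always exists --- is the correct repair of a genuine imprecision in the statement.
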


 We also notice that for Newtonian type potentials ,there were a deep and lively literature; for example, \cite{4}, \cite{6}-\cite{10}, \cite{12}-\cite{13}, \cite{18}-\cite{24}.
\section{Some Lemmas}

We introduce spaces

$$E=\{(x_1,\cdots,x_N)|x_i\in H^1(R/TZ;R^k),x_i(t+T/2)=-x_i(t)\},$$
$$\Delta =\{(x_1,\cdots,x_N)|x_i\in H^1(R/TZ;R^k), x_i(t)\neq x_j(t),\ \ \forall t,i\neq j\},$$
where $H^1(R/TZ;R^k)$ is the metric completion of smooth
$T$-periodic functions for the norm
$\|x\|_{H^1}=\left(\int^T_0|x(t)|^2+|\dot{x}(t)|^2dt\right)^{1/2}$,
and the functional $f:\Delta\rightarrow R$ is defined by
$$f(x_1,\cdots,x_N)=\sum\limits^N_{i=1}\frac{m_i}{2}\int^T_0|\dot{x}_i(t)|^2dt-\int^T_0V(t,x_1(t),\cdots,x_N(t))dt.$$

Clearly, $E$ is a closed subspace of $H^1(R/TZ; (R^k)^N)$, and so a Hilbert space, while $\Delta$ is an open subset of $E$.

Using a standard argument (for instance, see \cite{25}), it is easy to
prove the following Lemma 2.1:

\begin{lemma}\label{l2.1}
Suppose $(V1)-(V2)$ and $(V6)$ hold, then a
critical point of $f$ in $\Delta$ is a non-collision solution of
(\ref{1.3}).
\end{lemma}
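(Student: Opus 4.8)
The overall plan is to pass from a constrained critical point to an unconstrained one via a symmetry argument, and then apply ODE regularity. I would start by fixing a critical point $X=(x_1,\dots,x_N)$ of $f$ in $\Delta$. Over the one-dimensional circle $R/TZ$ we have $H^1\hookrightarrow C^0$, so each $x_i$ is continuous; since $X\in\Delta$ and $[0,T]$ is compact, $\min_{t,\,i\neq j}|x_i(t)-x_j(t)|>0$. Thus on a neighborhood of $X$ the differences $x_i-x_j$ stay off the singular set, and by (V1) and (V2) the potential part of $f$ is $C^1$ there while the kinetic part is smooth, so
$$f'(X)[h]=\sum_{i=1}^N\int_0^T\Big(m_i\dot x_i\cdot\dot h_i-\nabla_{x_i}V(t,X)\cdot h_i\Big)\,dt,\qquad h=(h_1,\dots,h_N).$$
As $\Delta$ is open in $E$, criticality of $X$ means this expression vanishes for every $h\in E$.

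The crux is to promote this to vanishing for every $h\in H^1(R/TZ;(R^k)^N)$, and this is exactly where (V1) and (V6) enter. I would introduce the linear involution $\sigma$ given by $(\sigma x)_i(t)=-x_i(t+T/2)$; one checks that $\sigma^2=\mathrm{id}$ (using $T$-periodicity), that $\sigma$ preserves $\Delta$, and that its fixed-point set is exactly $E$. A short computation shows $f\circ\sigma=f$: the kinetic term is invariant because $|\dot x_i(t+T/2)|^2$ integrates to the same value over a period, and for the potential term one substitutes $t\mapsto t+T/2$ and applies (V6) together with the $T$-periodicity of $V_{ij}$ in $t$. Differentiating the identity $f\circ\sigma=f$ at the fixed point $X=\sigma X$ yields $f'(X)[\sigma h]=f'(X)[h]$ for all $h$. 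Hence for arbitrary $h\in H^1$ the averaged field $\tfrac12(h+\sigma h)$ lies in $E$, and $f'(X)[h]=f'(X)\big[\tfrac12(h+\sigma h)\big]=0$, the final equality holding because $X$ is critical for $f|_E$. This is Palais' principle of symmetric criticality, here carried out by hand. I expect this step --- especially verifying $f\circ\sigma=f$ and keeping the bookkeeping straight --- to be the main obstacle.

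Once $f'(X)=0$ on all of $H^1$, the remainder is routine. Testing against $T$-periodic variations supported in a single component and invoking the fundamental lemma of the calculus of variations shows that each $m_i\dot x_i$ has a weak derivative equal to $-\nabla_{x_i}V(t,X)$; since that right-hand side is continuous (the $x_i-x_j$ avoid the singularity and $\nabla V_{ij}$ is continuous by (V2)), a standard bootstrap gives $x_i\in C^2$ and shows the equations (\ref{1.3}) hold in the classical sense. Finally, $X\in\Delta$ directly gives $x_i(t)\neq x_j(t)$ for all $t$ and $i\neq j$, i.e.\ no collisions, while membership in $E$ forces $x_i(t+T)=-x_i(t+T/2)=x_i(t)$, so that $X$ is genuinely $T$-periodic. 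Hence $X$ is a non-collision $T$-periodic solution of (\ref{1.3}).
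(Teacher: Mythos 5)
Your proposal is correct and is essentially the paper's approach: the paper gives no proof of Lemma \ref{l2.1}, deferring to the ``standard argument'' of Zhang--Zhou \cite{25}, and that argument is exactly what you carry out --- the principle of symmetric criticality for the involution $(\sigma x)_i(t)=-x_i(t+T/2)$ (whose invariance of $f$ is precisely where $(V6)$ enters, and which you correctly implement by hand via the averaging $\tfrac12(h+\sigma h)$), followed by the routine regularity bootstrap to a classical non-collision $T$-periodic solution of (\ref{1.3}).
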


The closed subset $\Gamma =E-\Delta$ of $E$ will be called the
collision set, and a standard argument can be applied to show that
the strong force assumption $(V3)$ implies that
$f(X)\rightarrow +\infty$ when $X$ approaches the collision set
$\Gamma$. More precisely, we have the following lemma.

\begin{lemma}\label{l2.2}
(\cite{13}\cite{25}) Assume V satisfies (V1)-(V3) and (V5). Let $\{X^n\}$ be a sequence in $\Delta$
and $X^n\rightarrow X\in\Gamma$ in both the $C^0$ topology and weak
topology of $E$, then $f(X^n)\rightarrow +\infty$.
\end{lemma}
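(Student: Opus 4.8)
The plan is to show that the only way a sequence $\{X^n\}\subset\Delta$ can approach the collision set while keeping $f$ bounded is ruled out by the strong-force hypothesis $(V5)$, so that the action is forced to diverge. First I would locate the collision: since $X=(x_1,\dots,x_N)\in\Gamma=E-\Delta$, there are indices $i\neq j$ and a time $t^{\ast}$ with $\xi(t^{\ast}):=x_i(t^{\ast})-x_j(t^{\ast})=0$. Writing $\xi^n=x_i^n-x_j^n$, the $C^0$ convergence gives $\xi^n(t^{\ast})\to 0$. Using the symmetry $(V1)$, so that $V_{ji}(x_j^n-x_i^n)=V_{ij}(\xi^n)$, the contribution of the pair $(i,j)$ to $f$ is $\frac{m_i}{2}\int_0^T|\dot x_i^n|^2+\frac{m_j}{2}\int_0^T|\dot x_j^n|^2-\int_0^T V_{ij}(t,\xi^n)\,dt$, and from $|a|^2+|b|^2\ge\tfrac12|a-b|^2$ the kinetic part dominates $\frac{m_0}{4}\int_0^T|\dot\xi^n|^2$, where $m_0=\min_k m_k>0$.

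Before exploiting $(V5)$ I would check that every remaining term of $f$ stays bounded below uniformly in $n$. Because $X^n\to X$ in $C^0$, for large $n$ all relative coordinates $\xi_{kl}^n$ take values in one fixed compact set $K\subset R^k$. By $(V2)$ each $V_{kl}$ is continuous on $K\setminus\{0\}$, and by $(V3)$ it tends to $-\infty$ near the origin; hence $V_{kl}$ is bounded above on $K\setminus\{0\}$, so $-\int_0^T V_{kl}\ge -MT$ for a constant $M$ independent of $n$. Together with the nonnegativity of every kinetic term, this reduces the lemma to proving that the single quantity $\frac{m_0}{4}\int_0^T|\dot\xi^n|^2-\int_0^T V_{ij}(t,\xi^n)\,dt$ tends to $+\infty$.

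The heart of the argument is the strong-force estimate, and I would split into two cases. If $\xi^n\to 0$ uniformly (the pair collides for all $t$ in the limit), then $(V3)$ gives $-V_{ij}(t,\xi^n)\to+\infty$ uniformly in $t$, so $-\int_0^T V_{ij}\to+\infty$ directly. Otherwise there is a time $t^{\ast\ast}$ with $|\xi(t^{\ast\ast})|>0$; fixing $\delta$ smaller than both $|\xi(t^{\ast\ast})|$ and the radius of the strong-force neighborhood, I would, for large $n$, set $a_n=t^{\ast}$ (where $|\xi^n(a_n)|\to 0$) and let $b_n$ be the first time past $a_n$ at which $|\xi^n(b_n)|=\delta$, so that $|\xi^n|\le\delta$ on $[a_n,b_n]$. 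On that interval $(V5)$ yields $-V_{ij}(t,\xi^n)\ge|U_{ij}'(\xi^n)|^2$, whence by the arithmetic--geometric mean inequality and the chain rule
\[
\frac{m_0}{4}|\dot\xi^n|^2-V_{ij}(t,\xi^n)\ \ge\ c\,|\dot\xi^n|\,|U_{ij}'(\xi^n)|\ \ge\ c\left|\tfrac{d}{dt}U_{ij}(\xi^n(t))\right|,
\]
and integrating over $[a_n,b_n]$ bounds this below by $c\,|U_{ij}(\xi^n(b_n))-U_{ij}(\xi^n(a_n))|$. Since $|\xi^n(a_n)|\to0$ forces $U_{ij}(\xi^n(a_n))$ to become unbounded while $U_{ij}(\xi^n(b_n))$ stays bounded (it lives on the compact sphere $|\xi|=\delta$), this difference diverges, and with it $f(X^n)\to+\infty$.

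I expect the main obstacle to be the bookkeeping of the second paragraph: one must be certain that no competing term of $f$---in particular the potentials of the non-colliding pairs, which need \emph{not} be sign-definite once $(V4)$ is relaxed---can drift to $-\infty$ and cancel the blow-up produced by the colliding pair. The uniform $C^0$ compactness of the relative coordinates, together with $(V2)$--$(V3)$, is exactly what confines every such term to a region where $V_{kl}$ is bounded above, and it is the strong force $(V5)$ that upgrades the merely pointwise singularity of $(V3)$ into an honest divergence of the integral even when the particles only graze the collision. The chain rule used above is legitimate because each $X^n\in E\subset H^1$ makes $U_{ij}\circ\xi^n$ absolutely continuous, so the stated weak convergence in $E$ plays no essential role beyond this.
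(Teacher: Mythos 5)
Your proof is correct and is essentially the standard Gordon strong-force argument that the paper itself does not reproduce but invokes by citation to \cite{13} and \cite{25}: isolate a colliding pair, bound every other term of $f$ below using the uniform $C^0$ bound together with $(V2)$--$(V3)$, and apply $-V_{ij}\geq |U_{ij}'|^2$ with the arithmetic--geometric mean inequality to get $\int_{a_n}^{b_n}\bigl|\tfrac{d}{dt}U_{ij}(\xi^n)\bigr|\,dt\geq \bigl|U_{ij}(\xi^n(b_n))-U_{ij}(\xi^n(a_n))\bigr|\to+\infty$. Your second paragraph, which secures the upper bound on the non-colliding pairs' potentials from compactness rather than from the sign condition $(V4)$, is exactly the extra care needed here, since this paper's Lemma 2.2 assumes only $(V1)$--$(V3)$ and $(V5)$ and the potentials need not be non-positive.
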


\begin{lemma}\label{l2.3}
 Assume V satisfies $(V1)-(V3),(V4)^{'},(V5)$ and $(V6)$, then there is a constant $\lambda_0$ depending
on $g,m_i,r,\theta$, s.t. $f$ satisfies the $(PS)_c$ condition for
$c\geq\lambda_0$; that is, any sequence $\{x^k\}\subset\Delta$ satisfying
$f(x^k)\rightarrow c$ and $f^{\prime}(x^k)\rightarrow 0$ is
pre-compact in $H^1$.
\end{lemma}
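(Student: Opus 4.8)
The plan is to verify the three standard ingredients of a local Palais--Smale statement: boundedness of the sequence in $E$, extraction of a limit, and promotion of weak and uniform convergence to strong $H^1$ convergence, with the level restriction $c\ge\lambda_0$ entering only through the coercivity estimate.

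First I would establish coercivity of $f$ on $\Delta$, which is the genuinely new point once $(V4)$ is weakened to the growth condition $(V4)'$. By $(V3)$ and continuity each $V_{ij}(t,\cdot)$ is bounded above on $\{0<|\xi|\le r\}$, so together with $(V4)'$ one gets $V_{ij}(t,\xi)\le g m_i m_j|\xi|^\theta+C_0$ for all $\xi\ne0$ and all $t$, whence
\begin{equation}
-\int_0^T V(t,X)\,dt\ge -\frac{g}{2}\sum_{1\le i\ne j\le N}m_im_j\int_0^T|x_i(t)-x_j(t)|^\theta\,dt-C_1 .
\end{equation}
The antiperiodicity $x_i(t+T/2)=-x_i(t)$ built into $E$ (coming from $(V6)$) kills the constant Fourier mode, so Wirtinger's inequality holds with the first antiperiodic eigenvalue and $\sum_i\int_0^T|\dot x_i|^2$ is equivalent to $\|X\|^2_E$; combined with the one--dimensional Sobolev embedding $\|x_i\|_{C^0}\le\kappa\|x_i\|_{H^1}$ this bounds each $\int_0^T|x_i-x_j|^\theta\,dt$ by a constant times $\|X\|_E^\theta$. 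Since $\theta<2$, this yields
\begin{equation}
f(X)\ge c_0\|X\|_E^2-C_2\|X\|_E^\theta-C_1 ,
\end{equation}
so $f$ is coercive and bounded below, with constants depending only on $g,m_i,r,\theta$; this is exactly where $\lambda_0$ is read off, and it forces any sequence with $f(x^k)\to c$ to be bounded in $E$.

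Next I would pass to a subsequence with $x^k\rightharpoonup x$ in $E$ and, by the compact embedding $H^1(R/TZ;R^k)\hookrightarrow C^0$, with $x^k\to x$ uniformly. The only way $x$ could fail to lie in $\Delta$ is that it belongs to the collision set $\Gamma$; but then Lemma~\ref{l2.2}, which uses the strong--force hypotheses $(V3),(V5)$, forces $f(x^k)\to+\infty$, contradicting $f(x^k)\to c<\infty$. Hence $x\in\Delta$, so $x$ is bounded away from collisions and $V$ is $C^1$ on a uniform neighbourhood of the ranges of the $x^k$. I would then test $f^{\prime}(x^k)\to0$ against $x^k-x$: the uniform convergence gives $\nabla V(t,x^k)\to\nabla V(t,x)$ uniformly and $x^k-x\to0$ in $L^2$, so the potential term in $f^{\prime}(x^k)(x^k-x)$ vanishes, while weak convergence disposes of $\sum_i m_i\int_0^T\dot x_i\cdot(\dot x_i^k-\dot x_i)\,dt$; what survives is $\sum_i m_i\int_0^T|\dot x_i^k-\dot x_i|^2\,dt\to0$. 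Together with the uniform (hence $L^2$) convergence of $x^k$ this gives $x^k\to x$ in $H^1$, i.e. precompactness.

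The hard part is the coercivity step, not the compactness. Dropping $V_{ij}\le0$ means $-\int V$ is no longer a friendly nonnegative term and can a priori destroy the lower bound; the whole estimate survives only because $(V4)'$ caps the positive part at subquadratic growth $\theta<2$ while the antiperiodicity upgrades the kinetic part to a genuinely quadratic, coercive form through Wirtinger. Balancing these two effects is what pins down $\lambda_0(g,m_i,r,\theta)$ and is the step most sensitive to the hypotheses; the collision analysis is then imported wholesale from Lemma~\ref{l2.2}, and the final strong--convergence argument is routine once the limit is known to avoid $\Gamma$.
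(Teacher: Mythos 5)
Your proposal is correct, and it reaches the paper's key estimate by a cleaner route. The paper proves boundedness of the Palais--Smale sequence through a trichotomy on the index pairs: (i) every pair satisfies $|\xi_{ij}^k(t)|>r$ for all $t$, (ii) some pair stays within distance $r$ for all $t$, (iii) some pair crosses the threshold $r$; in cases (ii)--(iii) it patches a constant upper bound for the close pairs together with the $(V4)'$ growth bound for the far pairs, and in each case runs the same H\"older--Wirtinger argument, exploiting $\theta<2$, to bound the kinetic energy. Your single uniform estimate $V_{ij}(t,\xi)\le g m_i m_j|\xi|^{\theta}+C_0$ for all $\xi\neq 0$ --- obtained from $(V3)$ plus continuity, compactness and $t$-periodicity on $\{0<|\xi|\le r\}$, and from $(V4)'$ outside --- subsumes all three cases at once, and it upgrades the conclusion to genuine coercivity of $f$ on $\Delta$, which is stronger than what the paper extracts (in particular it yields Lemma~\ref{l2.6} for free, and it avoids the paper's somewhat awkward case (ii), where Lemma~\ref{l2.2} is invoked for a lower bound on $V_{i_0j_0}$ that is never actually used, and where the displayed inequality has the sign of $V$ reversed). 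You also carry out the strong-convergence step explicitly --- exclusion of collision limits via Lemma~\ref{l2.2}, then testing $f'(x^k)$ against $x^k-x$ and using weak convergence plus uniform convergence of $\nabla V(t,x^k)$ --- whereas the paper compresses this to ``more or less standard.'' Two minor caveats, shared with the paper: the manipulations with $\|X\|_E^{\theta}$ implicitly assume $0\le\theta<2$ (if $\theta<0$, then $(V4)'$ makes $V_{ij}$ bounded above outside $B_r$ and the argument is only easier, but your displayed bound on $\int_0^T|x_i-x_j|^{\theta}dt$ would have to be replaced); and in truth both arguments establish $(PS)_c$ at \emph{every} finite level, so the threshold $\lambda_0$ is not really forced by the estimates --- it merely matches the form required by Majer's abstract theorem (Lemma~\ref{l2.4}).
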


\begin{proof}
We notice that the arguments in Jiang \cite{15} (also Chang K.C.\cite{5}) and Majer \cite{16} cannot
be directly generalized to the N-body case because of the translation invariance for positions
in N-body problems. Here we must consider the differences and must use different arguments.
By Holder's inequality, we have
$$\sum_{i<j}m_im_j|x_i-x_j|^{\theta}\leq(\sum_{i<j}m_im_j)^{(2-\theta)/2}
(\sum_{i<j}m_im_j|x_i-x_j|^2)^{\frac{\theta}{2}}$$
Calculating,
\begin{eqnarray*}
\sum_{i<j}m_im_j|x_i-x_j|^2&=&\frac{1}{2}\sum_{1\leq i,j\leq N}m_im_j|x_i-x_j|^2\\
&=&\sum_{i=1}^Nm_i\sum_{i=1}^Nm_i|x_i|^2-(\sum_{i=1}^Nm_ix_i)^2\\
&\leq&\sum_{i=1}^Nm_i\sum_{i=1}^Nm_i|x_i|^2
\end{eqnarray*}
and
$$f(x^k)=f(x_1^k,\cdots,x_N^k)=\sum\limits^N_{i=1}\frac{m_i}{2}\int^T_0|\dot{x}_i^k(t)|^2dt-
\int^T_0V(t,x_1^k(t),\cdots,x_N^k(t))dt.$$
Let
 $$\xi_{ij}^k(t)=x_i^k(t)-x_j^k(t)$$
We consider the three possibilities:

 (i). For all $1\leq i,j\leq N$ and for all $t\in [0,T]$, $|\xi_{ij}^k(t)|>r$ when $k$ is large, then by $(V4)'$ and the above inequality,
 we have
 \begin{eqnarray*}
&\sum\limits^N_{i=1}&\frac{m_i}{2}\int^T_0|\dot{x}_i^k(t)|^2dt-
\int^T_0V(t,x_1^k(t),\cdots,x_N^k(t))dt \\
&\geq& \sum\limits^N_{i=1}\frac{m_i}{2}\int^T_0|\dot{x}_i^k(t)|^2dt
-g(\sum_{i<j}m_im_j)^{(2-\theta)/2}[\sum_{i=1}^Nm_i]^{\frac{\theta}{2}}
\int_0^T[\sum_{i=1}^Nm_i|x_i^k|^2]^{\frac{\theta}{2}}dt.
\end{eqnarray*}
 Since $x^k(t+T/2)=-x^k(t)$ implies $\int_0^Tx^k(t)dt=0$, by the Wirtinger's inequality and $f(x^k)\rightarrow c\leq d$, we get
 \begin{equation*}
d\geq \sum\limits^N_{i=1}\frac{m_i}{2}\int^T_0|\dot{x}_i^k(t)|^2dt
-g[\sum_{i<j}m_im_j]^{(2-\theta)/2}[\sum_{i=1}^Nm_i]^{\frac{\theta}{2}}[\frac{T}{2\pi}]^{\theta}
\int_0^T[\sum_{i=1}^Nm_i|\dot{x}_i^k|^2]^{\frac{\theta}{2}}dt.
\end{equation*}
By the assumption $(V4)'$, we know that $\theta<2$, hence we have $e>0$ such that
$$\sum\limits^N_{i=1}\frac{m_i}{2}\int^T_0|\dot{x}_i^k(t)|^2dt\leq e.$$

(ii). There are $1\leq i_0,j_0\leq N$ such that for all $t\in [0,T]$, there holds $|\xi_{i_0j_0}^k(t)|\leq r$ when $k$ is large, then by Lemma \ref{l2.2} and $(V2)$, we have $a>-\infty$ and $0<b<+\infty$ such that for all $t\in [0,T]$,
$$a\leq V_{i_0j_0}(t,\xi_{i_0j_0}^k(t))\leq cm_{i_0}m_{j_0}|\xi_{i_0j_0}^k(t)|^{\theta}\leq b$$
Then for the rest index pairs $(i,j)$ and the corresponding potentials, we can use the above arguments of $(i)$ and notice that we can add some negative terms to estimate the lower bound for the sum of all the potentials satisfying $|\xi_{ij}^k(t)|>r$ :

$$-g(\sum_{i<j}m_im_j)^{(2-\theta)/2}[\sum_{i=1}^Nm_i]^{\frac{\theta}{2}}
[\sum_{i=1}^Nm_i|x_i^k|^2]^{\frac{\theta}{2}}.$$
Now we can consider all cases for the index pairs. We have
$$V(t,x_1^k(t),\cdots,x_N^k(t))\geq \frac{N^2-N}{2}(-b)
-g(\sum_{i<j}m_im_j)^{(2-\theta)/2}[\sum_{i=1}^Nm_i]^{\frac{\theta}{2}}
[\sum_{i=1}^Nm_i|x_i^k|^2]^{\frac{\theta}{2}}.$$
Then taking the integral and using a similar argument as in (i), we can also get $e_1>0$ such that

$$\sum\limits^N_{i=1}\frac{m_i}{2}\int^T_0|\dot{x}_i^k(t)|^2dt\leq e_1.$$

(iii). There are $1\leq i_0,j_0\leq N$ ,$t_1\in [0,T]$ and $t_2\in [0,T]$ such that $|\xi_{i_0j_0}^k(t_1)|>r,$
 $|\xi_{i_0j_0}^k(t_2)|\leq r$ when $k$ is large.\\
 Then
 $$-V(t_1,x_1^k(t_1),\cdots,x_N^k(t_1))\geq -g(\sum_{i<j}m_im_j)^{(2-\theta)/2}[\sum_{i=1}^Nm_i]^{\frac{\theta}{2}}
[\sum_{i=1}^Nm_i|x_i^k(t_1)|^2]^{\frac{\theta}{2}},$$
$$-V(t_2,x_1^k(t_2),\cdots,x_N^k(t_2))\geq -b.$$
Hence for all $t\in [0,T]$, we have

 $$-V(t,x_1^k(t),\cdots,x_N^k(t))\geq -b-g(\sum_{i<j}m_im_j)^{(2-\theta)/2}[\sum_{i=1}^Nm_i]^{\frac{\theta}{2}}
[\sum_{i=1}^Nm_i|x_i^k(t)|^2]^{\frac{\theta}{2}}.$$

Again, after taking the integral, we can find $e_2>0$ such that

$$\sum\limits^N_{i=1}\frac{m_i}{2}\int^T_0|\dot{x}_i^k(t)|^2dt\leq e_2.$$

In all cases, we get the bounded property for
$$\sum\limits^N_{i=1}\frac{m_i}{2}\int^T_0|\dot{x}_i^k(t)|^2dt.$$
This implies $\left\{x^k\right\}$ has a weakly convergent subsequence. To prove the trongly convergent property is more or less standard.
\end{proof}

The following is an abstract critical point theorem which we will
use in the proof of our main result. A proof of this theorem
can be found in Majer\cite{16}.

\begin{lemma}\label{l2.4}
Let $\Delta$ be an
open subset in a Banach space and let $Cat(\Delta)$ denote the category of $\Delta$. Suppose $f$ is a functional on $\Delta$. Assume that
\begin{enumerate}
\item[1.] Cat$\Delta =+\infty$,

\item[2.] For any sequence $\{q_n\}\subset\Delta$ and $q_n\rightarrow
q\in\partial\Delta$, we will have $f(q_n)\rightarrow +\infty$,

\item[3.] For any $K\in R, Cat_{\Delta}(\{q\in\Delta |f(q)\leq
K\})<+\infty$, and

\item[4.] There exists a $\lambda_0\in R$ such that the Palais-Smale
condition holds on the set $\{q\in\Delta |f(q)\geq\lambda_0\}$.
\end{enumerate}
Then $f$ possesses an unbounded sequence of critical values.
\end{lemma}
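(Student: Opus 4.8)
The plan is to run the classical Ljusternik--Schnirelmann minimax scheme, adapted to the facts that $f$ blows up toward $\partial\Delta$ and that the Palais--Smale condition is available only on the upper set $\{f\geq\lambda_0\}$. For each positive integer $k$ I would introduce the family of closed sets
$$\mathcal{A}_k=\{A\subset\Delta:\ A\ \text{closed},\ Cat_\Delta(A)\geq k\}$$
and define the minimax levels
$$c_k=\inf_{A\in\mathcal{A}_k}\ \sup_{q\in A}f(q).$$
Hypothesis 1, $Cat_\Delta(\Delta)=+\infty$, guarantees each $\mathcal{A}_k$ is nonempty, so every $c_k$ is well defined; and since $\mathcal{A}_{k+1}\subset\mathcal{A}_k$, the sequence $\{c_k\}$ is non-decreasing. (Here I take $f$ to be $C^1$ on $\Delta$, as in the intended application, so that a pseudo-gradient flow is available.)

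Next I would show $c_k\to+\infty$, arguing by contradiction. If $c_k\leq K$ for every $k$, then for each $k$ there is $A_k\in\mathcal{A}_k$ with $\sup_{A_k}f\leq K+1$, hence $A_k\subset\{f\leq K+1\}$; monotonicity of the category would then give $Cat_\Delta(\{f\leq K+1\})\geq Cat_\Delta(A_k)\geq k$ for all $k$, contradicting Hypothesis 3. So the monotone sequence $\{c_k\}$ is unbounded.

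The core of the argument is to prove that every level $c=c_k$ with $c>\lambda_0$ is a critical value; combined with $c_k\to+\infty$ this produces an unbounded set of critical values and proves the theorem. Suppose $c$ were regular. Choosing $\epsilon>0$ with $c-\epsilon>\lambda_0$ confines the band $\{|f-c|\leq\epsilon\}$ to $\{f\geq\lambda_0\}$, where Hypothesis 4 supplies $(PS)_c$; the standard deformation lemma then yields a homotopy $\eta(t,\cdot)$ of $\Delta$, with $\eta(0,\cdot)=\mathrm{id}$ and $\eta(1,\{f\leq c+\epsilon\})\subset\{f\leq c-\epsilon\}$. Taking $A\in\mathcal{A}_k$ with $\sup_A f\leq c+\epsilon$ (possible since $c_k$ is an infimum), homotopy invariance of the category gives $Cat_\Delta(\eta(1,A))\geq Cat_\Delta(A)\geq k$, so $\eta(1,A)\in\mathcal{A}_k$; but $\eta(1,A)\subset\{f\leq c-\epsilon\}$ forces $c_k\leq c-\epsilon<c=c_k$, a contradiction.

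The main obstacle, and the place where Hypotheses 2 and 4 must cooperate, is ensuring the deformation never leaves the open set $\Delta$. I would use Hypothesis 2: since $f(q_n)\to+\infty$ whenever $q_n\to\partial\Delta$, no trajectory of the negative pseudo-gradient flow --- along which $f$ is non-increasing and bounded above by $c+\epsilon$ --- can converge to a boundary point, so the flow stays inside $\Delta$. Restricting attention to the large levels $c_k>\lambda_0$ is precisely what keeps the entire construction within $\{f\geq\lambda_0\}$, where compactness is guaranteed. Checking that the pseudo-gradient vector field can be built and cut off without spoiling either the boundary confinement or the category estimate is the principal technical step.
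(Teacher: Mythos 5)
The paper does not actually prove this lemma --- it is quoted verbatim from Majer \cite{16} --- so your proposal has to be measured against Majer's minimax argument, and your outline is indeed that argument in its broad strokes: minimax over sets of category at least $k$, divergence of the levels via Hypothesis 3, a deformation confined to $\Delta$ by Hypothesis 2, and Hypothesis 4 supplying compactness exactly where it is used. But there is one genuine gap: you never establish that the levels $c_k$ are \emph{finite}. Nonemptiness of $\mathcal{A}_k$ follows from Hypothesis 1 only because $\Delta$ itself lies in every $\mathcal{A}_k$, and $\sup_\Delta f=+\infty$ (this is forced by Hypothesis 2 whenever $\partial\Delta\neq\emptyset$, as in the application). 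Your pivotal step ``Taking $A\in\mathcal{A}_k$ with $\sup_A f\leq c+\epsilon$'' silently assumes $c_k<+\infty$, i.e.\ that for every $k$ there is a closed set of category at least $k$ on which $f$ is bounded above. None of Hypotheses 1--4 says this, and Hypothesis 3 pulls in the opposite direction (sets on which $f$ is bounded have finite category). If some $c_k=+\infty$, your scheme yields only finitely many critical values and the proof stops exactly there; so this is a missing idea, not a routine verification.

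The standard ways to close the hole are: (a) use that $\Delta$, being open in a Banach space, is an ANR, and that for ANRs the Ljusternik--Schnirelmann category is the supremum of the categories of compact subsets; since $f$ is continuous on $\Delta$, compact sets of arbitrarily large category are $f$-bounded, giving $c_k<+\infty$ for all $k$ (in the present paper this is harmless anyway, because the proof of Lemma \ref{l2.5} exhibits compact sets of large category); or (b) bypass the minimax entirely: if the critical values were bounded above by some $a>\lambda_0$, the second deformation (non-critical interval) theorem --- applicable on $[a,+\infty)$ by Hypothesis 4, and running inside $\Delta$ because Hypothesis 2 makes each sublevel set closed in the ambient space, hence complete and flow-invariant --- would deformation-retract $\Delta$ onto $\{f\leq a\}$, so that $Cat(\Delta)=Cat_\Delta(\{f\leq a\})<+\infty$ by Hypothesis 3, contradicting Hypothesis 1. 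Your remaining steps are correct and standard: the divergence $c_k\to+\infty$ from Hypothesis 3, the choice $c-\epsilon>\lambda_0$ keeping the deformation in the region where $(PS)$ holds, and the boundary confinement --- though there your phrase ``can converge to a boundary point'' should be replaced by the closedness-of-sublevels observation just mentioned, since a trajectory may approach $\partial\Delta$ without converging. You also correctly flag that $f$ must be taken $C^1$ for the pseudo-gradient construction, an assumption the lemma's statement omits but the application supplies.
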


Fadell-Husseini \cite{11} and Zhang-Zhou \cite{25} proved:

\begin{lemma}\label{l2.5}
If $\Delta$ refers to the open subset defined in our proof of Theorem 1.8, then
$$Cat(\Delta )=+\infty.$$
\end{lemma}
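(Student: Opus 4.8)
The plan is to prove $\mathrm{Cat}(\Delta)=+\infty$ by the classical lower bounds for the Lusternik--Schnirelmann category: $\mathrm{Cat}(\Delta)$ dominates the number of path components of $\Delta$, and it also dominates $\ell+1$, where $\ell$ is the cup-length of $H^{*}(\Delta;\mathbb{F})$ for any coefficient field $\mathbb{F}$. I would stress from the start that the free involution on $\Delta$ furnished by the half-period time shift $(\sigma x)(t)=x(t+T/2)=-x(t)$ is, by itself, not enough: a free $\mathbb{Z}_2$-action may live on a contractible space such as $S^{\infty}$, where $\mathrm{Cat}=1$. So one must exhibit genuine, untwisted topology of $\Delta$ rather than a mere equivariant index, and that topology is the loop-space nature of $\Delta$.

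First I would fix the homotopy model. Evaluation at $t=0$, $\mathrm{ev}_0:\Delta\to M$, where $M=\{(x_1,\dots,x_N)\in(\mathbb{R}^{k})^{N}:x_i\neq x_j,\ i\neq j\}$ is the configuration space, is a fibration; since an element of $\Delta$ is determined by its restriction to $[0,T/2]$ subject to $x(T/2)=-x(0)$, the fibre over $x_0$ is the space of paths in $M$ from $x_0$ to its antipode $\tau(x_0)=-x_0$, which is homotopy equivalent to the based loop space $\Omega M$. Thus $\Delta$ has the homotopy type of the twisted loop space $\Lambda_{\tau}M$, fibred over $M$ with fibre $\Omega M$; note $\tau$ is fixed-point-free on $M$ (for $N\ge 2$, $-X=X$ forces $X=0$), which is exactly why $\sigma$ is free.

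Next I would split according to $k$. For $k=2$ the configuration space $M$ is aspherical with infinite fundamental group (the pure braid group), so $\Omega M$ is homotopy discrete and $\Delta$ has infinitely many path components (the $\tau$-twisted conjugacy classes), whence $\mathrm{Cat}(\Delta)=+\infty$ immediately. For $k\ge 3$, $M$ is simply connected but not contractible: by the computations of Arnold and F.~Cohen, $H^{*}(M)$ is generated by classes of degree $k-1\ge 2$. The loop space of such a space has infinite cup-length over $\mathbb{Q}$, since the lowest transgressing generator produces an even-degree polynomial subalgebra in $H^{*}(\Omega M;\mathbb{Q})$. To realise these classes inside the collision-free space $\Delta$ I would build, for each $n$, an explicit family: antiperiodicity forces only odd harmonics $\cos((2m+1)\omega t),\ \sin((2m+1)\omega t)$ with $\omega=2\pi/T$, and by superposing finitely many of them on mutually well-separated concentric scales, with the $N$ particles assigned pairwise distinct radii so that no collision can occur, I obtain a map from a projective-type parameter space into $\Delta$ whose pullback retains a nonzero $n$-th cup power. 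Letting $n\to\infty$ gives $\ell=+\infty$, hence $\mathrm{Cat}(\Delta)=+\infty$; this is the route of Fadell--Husseini \cite{11} and Zhang--Zhou \cite{25}, which we adopt.

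The genuine obstacle I anticipate, in the case $k\ge 3$, is the survival of the fibre classes in $\Delta$: a priori the Serre spectral sequence of $\mathrm{ev}_0$ could destroy the powers of the generator through differentials or extensions, and the twisting by $\tau$ (whose homotopy class depends on the parity of $k$, since $-I\in SO(k)$ only for even $k$) together with the non-compactness of $M$ must be controlled. I would bypass the spectral-sequence bookkeeping precisely through the explicit collision-free families above, which manufacture the required cup-length directly in $\Delta$; the real work is then the twofold verification that these families never leave $\Delta$ (the non-collision condition, guaranteed by the separated radii) and that they detect the $n$-th cup power (for which the H-space, James-type retractions on $\Omega M$ are the model). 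Once both are in place the category is unbounded and Lemma~\ref{l2.4} applies.
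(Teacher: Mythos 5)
You should first be aware that the paper contains no proof of Lemma \ref{l2.5} to compare against: the lemma is imported wholesale from Fadell--Husseini \cite{11} and Zhang--Zhou \cite{25}, and the sentence preceding it (``Fadell--Husseini and Zhang--Zhou proved:'') is the entirety of the paper's argument. Your outline does reconstruct, correctly, the skeleton of what those references do: the evaluation fibration $\mathrm{ev}_0\colon \Delta\to M$ over the configuration space $M$ with fibre homotopy equivalent to $\Omega M$, the identification of $\Delta$ with a $\tau$-twisted loop space, the dichotomy between $k=2$ (count path components) and $k\ge 3$ (cup-length of the loop-space cohomology), and the correct warning that the free $\mathbb{Z}_2$-action alone cannot force infinite category. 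Those observations, and the two classical lower bounds for category you invoke, are all sound.

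Judged as a proof, however, your write-up has a genuine gap, and it sits exactly where you yourself locate the ``genuine obstacle.'' For $k\ge 3$ the entire content of the lemma is that the infinite cup-length of the fibre $\Omega M$ is inherited by the total space $\Delta\simeq\Lambda_{\tau}M$; since $\Lambda_{\tau}M$ admits no section of constant loops (antiperiodicity excludes them), this is not formal, and the Serre spectral sequence could in principle kill every fibre class. Your proposed repair --- a ``projective-type parameter space'' of superposed odd harmonics on ``well-separated concentric scales'' with ``pairwise distinct radii,'' whose pullback ``retains a nonzero $n$-th cup power'' --- is asserted, not constructed: you specify neither the parameter space, nor the map into $\Delta$, nor the cohomological argument that detects the $n$-th cup power under pullback. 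That detection is not bookkeeping; it is the theorem of \cite{11} (adapted to the antiperiodic symmetry in \cite{25}), and closing with ``which we adopt'' reduces your argument to the same citation the paper makes, padded with a sketch. A smaller gap of the same kind: for $k=2$ you assert without proof that there are infinitely many $\tau$-twisted conjugacy classes of the pure braid group; this is fixable (since $-I\in SO(2)$, the map $\tau$ is isotopic to the identity, so twisted classes become ordinary conjugacy classes, and these are infinite because the full twist generates an infinite centre), but as written it is another unproven claim. In short, the strategy is the right one, but the proposal establishes the lemma only modulo precisely the results it was supposed to prove.
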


We notice that we can use the similar methods as in Lemma \ref{l2.3} to prove

\begin{lemma}\label{l2.6}
For any $K\in R$ such that $f(q)\leq
K$, there is $ A\geq 0$ such that
$$\sum\limits^N_{i=1}\frac{m_i}{2}\int^T_0|\dot{x}_i(t)|^2dt\leq A.$$
\end{lemma}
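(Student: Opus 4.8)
The plan is to run the same chain of estimates that established the kinetic-energy bound inside the proof of Lemma~\ref{l2.3}, observing that there the Palais--Smale hypothesis $f'(x^k)\to 0$ is never actually used to bound $\sum_i \frac{m_i}{2}\int_0^T|\dot x_i|^2\,dt$; only the upper control $f(x^k)\le d$ enters. Here the hypothesis $f(q)\le K$ plays exactly that role, so the argument transfers with $K$ in place of $d$. Concretely, I would start from $f(q)=\sum_i\frac{m_i}{2}\int_0^T|\dot x_i|^2\,dt-\int_0^T V(t,x_1,\dots,x_N)\,dt\le K$ and seek a lower bound for $-\int_0^T V\,dt$ of the form $-C_1-C_2\int_0^T[\sum_i m_i|x_i|^2]^{\theta/2}\,dt$.

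To obtain that lower bound I would use the two algebraic facts recorded before Lemma~\ref{l2.3}: the H\"older estimate $\sum_{i<j}m_im_j|x_i-x_j|^\theta\le(\sum_{i<j}m_im_j)^{(2-\theta)/2}(\sum_{i<j}m_im_j|x_i-x_j|^2)^{\theta/2}$ together with $\sum_{i<j}m_im_j|x_i-x_j|^2\le(\sum_i m_i)(\sum_i m_i|x_i|^2)$, which comes from discarding the nonnegative center-of-mass term $|\sum_i m_i x_i|^2$. I would then split into the same three cases according to whether all pairwise distances $|\xi_{ij}(t)|=|x_i(t)-x_j(t)|$ exceed $r$, or some pair stays within $r$ for all $t$, or some pair crosses the threshold $r$: where $|\xi_{ij}|>r$ the growth condition $(V4)'$ gives $V_{ij}\le gm_im_j|\xi_{ij}|^\theta$, while on the complementary set Lemma~\ref{l2.2} (the strong-force blow-up) together with $(V2)$ supplies a finite upper bound $b$ for each such potential, contributing only a fixed additive constant. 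In every case $-\int_0^T V\,dt$ is bounded below by $-C_1-C_2\int_0^T[\sum_i m_i|x_i|^2]^{\theta/2}\,dt$.

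The decisive structural point, as in Lemma~\ref{l2.3}, is that the symmetry constraint $x_i(t+T/2)=-x_i(t)$ defining $E$ forces $\int_0^T x_i(t)\,dt=0$ for each $i$, so Wirtinger's inequality $\int_0^T|x_i|^2\,dt\le(T/2\pi)^2\int_0^T|\dot x_i|^2\,dt$ applies componentwise despite the translation invariance of the $N$-body interaction. Combining Wirtinger with one more H\"older step in the $t$-variable yields $\int_0^T[\sum_i m_i|x_i|^2]^{\theta/2}\,dt\le C_3\big(\sum_i\frac{m_i}{2}\int_0^T|\dot x_i|^2\,dt\big)^{\theta/2}$. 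Writing $K_e:=\sum_i\frac{m_i}{2}\int_0^T|\dot x_i|^2\,dt$, the sublevel inequality collapses to $K_e\le K+C_1+C_2 C_3\,K_e^{\theta/2}$.

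The final, and only genuinely substantive, step is the absorption: since $(V4)'$ guarantees $\theta<2$, we have $\theta/2<1$, so an inequality of the shape $K_e\le c_0+c_1 K_e^{\theta/2}$ forces $K_e$ to lie below the largest root of $s=c_0+c_1 s^{\theta/2}$, producing an explicit bound $A=A(K,g,m_i,r,\theta)$. I expect the main obstacle to be bookkeeping rather than conceptual, namely keeping the additive constants generated by the ``small-distance'' pairs (Cases (ii) and (iii)) uniform over the entire sublevel set; this is precisely where the strong-force Lemma~\ref{l2.2} and the sign condition $(V2)$ must be invoked carefully, so that the lower bound on $-\int_0^T V\,dt$ depends only on $K,g,m_i,r,\theta$ and not on the particular $q$.
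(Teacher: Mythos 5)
Your proposal is correct and is essentially the paper's own proof: the paper establishes Lemma~\ref{l2.6} by remarking that the methods of Lemma~\ref{l2.3} apply, and the key observation justifying that remark is exactly the one you make, namely that the kinetic-energy bound in Lemma~\ref{l2.3} uses only the upper bound $f(x^k)\le d$ (never the gradient condition $f'(x^k)\to 0$), so the same H\"older/center-of-mass estimates, the same three-case splitting at the threshold $r$, the same Wirtinger step via $\int_0^T x_i\,dt=0$, and the same absorption using $\theta/2<1$ go through with $K$ in place of $d$. If anything, your ordering of the final step (H\"older in $t$ first, then Wirtinger on $\int_0^T\sum_i m_i|x_i|^2\,dt$) is a cleaner rendering than the paper's displayed inequality, which moves the derivative inside the power $[\,\cdot\,]^{\theta/2}$ pointwise.
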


 Zhang-Zhou \cite{25} gave:

\begin{lemma}\label{l2.7}
 For any constant
$K\geq 0$, the set $D_K=\{X\in\Delta |\ \|\dot{X}\|_{L^2}\leq K\}$
is of finite category in $\Delta$, i.e.,
$Cat_{\Delta}(D_k)<+\infty$.
\end{lemma}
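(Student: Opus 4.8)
The plan is to turn the velocity bound $\|\dot X\|_{L^2}\le K$ into \emph{uniform} a priori control on the configurations in $D_K$, and then to convert that control into a bound on the topological complexity of $D_K$ as a subset of $\Delta$. First I would record the elementary estimates. The antisymmetry $x_i(t+T/2)=-x_i(t)$ forces $\int_0^T x_i\,dt=0$, so Wirtinger's inequality gives $\|x_i\|_{L^2}\le \frac{T}{2\pi}\|\dot x_i\|_{L^2}$ and the Sobolev embedding $H^1\hookrightarrow C^0$ then yields a uniform bound $\|x_i\|_\infty\le c(K)$ on $D_K$; moreover Cauchy--Schwarz gives $|x_i(t)-x_i(s)|\le K|t-s|^{1/2}$, so $D_K$ is bounded and equicontinuous, hence relatively compact in $C^0(R/TZ;(R^k)^N)$ by Arzela--Ascoli. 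In particular $D_K$ is bounded in $H^1$, so it is a bounded subset of the Hilbert space $E$ lying in the open set $\Delta$ (an ANR).

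Second, I would exploit the antisymmetry to truncate in frequency. Since each $x_i$ has only odd Fourier harmonics, the projection $P_M$ onto harmonics $n\le M$ satisfies a tail estimate that is uniform over $D_K$, namely $\|x_i-P_Mx_i\|_\infty\le C K/\sqrt M$. On the region $D_K^{\delta}=\{X\in D_K:\ \min_{i\ne j}\min_t|x_i(t)-x_j(t)|\ge\delta\}$ the straight-line homotopy $s\mapsto (1-s)X+sP_MX$ never leaves $\Delta$ once $M=M(\delta,K)$ is chosen so large that $CK/\sqrt M<\delta/2$; this deformation carries $D_K^\delta$ into the finite-dimensional space $P_M E$, where the image is bounded and bounded away from the collision set $\Gamma$, hence has compact closure in $\Delta$. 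Since compact subsets of the ANR $\Delta$ have finite category, $\mathrm{Cat}_\Delta(D_K^\delta)<+\infty$ for each fixed $\delta>0$.

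The main obstacle is the near-collision part $D_K\setminus D_K^\delta$: bounded velocity does \emph{not} keep configurations away from $\Gamma$ (one can build arbitrarily thin antisymmetric relative loops $x_i-x_j$ with tiny $\|\dot{}\,\|_{L^2}$), so $D_K$ genuinely accumulates on $\partial\Delta$ and fails to be compact in the $H^1$ topology, and a single truncation scale cannot work on all of $D_K$. To close the argument I would show that this metric degeneration carries no topological complexity: being close to a collision is not a homotopy obstruction in $\Delta$, and the velocity bound caps the amount of braiding/winding that the relative loops $x_i-x_j\in R^k\setminus\{0\}$ can perform. I would make this quantitative either (i) by constructing a collision-opening deformation of $D_K$ inside $\Delta$ that increases all mutual distances up to a uniform $\delta_0(K)$ --- its continuity and the uniformity of the gain being exactly where the velocity bound is essential --- thereby reducing to the compact case above, or (ii) by invoking the Fadell--Husseini cohomological-index estimate for the antisymmetric, hence $\mathbb{Z}_2$-equivariant, configuration space, which bounds the category of a bounded-velocity set by a constant depending only on $N$, $k$ and $K$. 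Either route yields $\mathrm{Cat}_\Delta(D_K)<+\infty$, and I expect this control of the topology near $\Gamma$ to be the decisive and most delicate point.
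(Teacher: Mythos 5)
Your first two paragraphs are essentially correct: the Wirtinger/Sobolev estimates, the $C^0$-precompactness of $D_K$, and the finite category of the truncated region $D_K^{\delta}$ via Fourier projection (using monotonicity of category under deformations in $\Delta$ and the fact that compact subsets of an ANR have finite category) all check out. But they only treat the part of $D_K$ that stays a distance $\delta$ from the collision set, and, as you yourself say, the entire content of the lemma lies in the near-collision part $D_K\setminus D_K^{\delta}$. At exactly that point your text stops being a proof and becomes a plan: route (i) is announced but never constructed, and route (ii) is in effect a citation of the statement to be proved. (For what it is worth, the paper does the same thing: Lemma \ref{l2.7} is stated without proof as a quotation of Zhang--Zhou \cite{25}; so there is no argument in the paper to fall back on, and a blind proof must actually supply this step.)

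The gap is not cosmetic, because route (i) as sketched cannot work. First, a ``collision-opening'' deformation must stay inside $E$, i.e.\ preserve the constraint $x_i(t+T/2)=-x_i(t)$, so one cannot push bodies apart by translations (constants are not in $E$); this is precisely the $N$-body difficulty the paper emphasizes in Lemma \ref{l2.3}. Second, and decisively, any such deformation must use $k\ge 3$ in an essential way, a hypothesis your argument never invokes: for $k=2$ the lemma is \emph{false}. Indeed, take $x_1(t)=-x_2(t)=\frac{1}{2n}\left(\cos\frac{2\pi nt}{T},\sin\frac{2\pi nt}{T}\right)$ with $n$ odd (and, if $N>2$, the remaining bodies on large well-separated circles carrying only the first harmonic). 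These configurations lie in $E$, avoid collisions, and have $\|\dot X\|_{L^2}$ bounded uniformly in $n$, yet the relative loop $x_1-x_2$ has winding number $n$ about the origin; hence these points of $D_K$ lie in infinitely many distinct path components of $\Delta$, and since a set contractible in $\Delta$ lies in a single component, $Cat_{\Delta}(D_K)=+\infty$ in the planar case. So the bounded-velocity hypothesis alone does not ``cap the braiding''; what makes the lemma true for $k\ge 3$ is that the extra dimension allows fast tiny windings to be unwound inside $\Delta$, and quantifying this (as in Fadell--Husseini \cite{11} and Zhang--Zhou \cite{25}) is the actual proof. Your proposal correctly reproduces the easy half and correctly locates the hard half, but the hard half is missing, and the deformation you gesture at would, as described, apply equally to $k=2$, where the conclusion fails.
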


By the monotone property of category and using Lemma \ref{l2.6} and Lemma \ref{l2.7}, we have

\begin{lemma}\label{l2.8}
 For any $K\in R, Cat_{\Delta}(\{q\in\Delta |f(q)\leq
K\})<+\infty$.
\end{lemma}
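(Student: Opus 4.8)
The plan is to realize the sublevel set $\{q\in\Delta\mid f(q)\leq K\}$ as a subset of one of the sets $D_{K'}$ from Lemma \ref{l2.7}, and then invoke the monotonicity of the Lusternik--Schnirelmann category. First I would fix $K\in R$ and set $L_K=\{q\in\Delta\mid f(q)\leq K\}$. By Lemma \ref{l2.6}, applied with this $K$, there is a constant $A\geq 0$ such that every $X=(x_1,\cdots,x_N)\in L_K$ satisfies $\sum_{i=1}^N\frac{m_i}{2}\int_0^T|\dot{x}_i(t)|^2dt\leq A$.

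Next I would convert this kinetic-energy bound into a bound on the $L^2$-norm of the velocity. Since all the masses are strictly positive, put $m=\min_{1\leq i\leq N}m_i>0$; then
$$\frac{m}{2}\|\dot{X}\|_{L^2}^2=\frac{m}{2}\sum_{i=1}^N\int_0^T|\dot{x}_i(t)|^2dt\leq\sum_{i=1}^N\frac{m_i}{2}\int_0^T|\dot{x}_i(t)|^2dt\leq A,$$
so that $\|\dot{X}\|_{L^2}\leq K'$ with $K':=\sqrt{2A/m}$. In other words $L_K\subset D_{K'}$, where $D_{K'}=\{X\in\Delta\mid\|\dot{X}\|_{L^2}\leq K'\}$ is exactly the set appearing in Lemma \ref{l2.7}.

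Finally, Lemma \ref{l2.7} gives $Cat_{\Delta}(D_{K'})<+\infty$, and the monotonicity of the category under inclusion (if $A\subset B$ then $Cat_{\Delta}(A)\leq Cat_{\Delta}(B)$) yields
$$Cat_{\Delta}(L_K)\leq Cat_{\Delta}(D_{K'})<+\infty,$$
which is the desired conclusion.

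I would expect no genuine obstacle in this argument: all of the analytic content has already been packaged into Lemmas \ref{l2.6} and \ref{l2.7}, and what remains is the bookkeeping of converting the mass-weighted energy bound into a plain $L^2$-velocity bound (using only $\min_i m_i>0$) together with the formal monotonicity of category. The one point requiring minimal care is that the constant $A$ furnished by Lemma \ref{l2.6} must be uniform over the whole sublevel set rather than depending on the individual $X$; since Lemma \ref{l2.6} is stated precisely for all $q$ with $f(q)\leq K$, this uniformity is already built in, and the proof is complete.
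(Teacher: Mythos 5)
Your proposal is correct and follows exactly the paper's argument: the paper derives Lemma \ref{l2.8} in one line by combining Lemma \ref{l2.6}, Lemma \ref{l2.7}, and the monotonicity of category, which is precisely your route. Your explicit conversion of the mass-weighted energy bound into the $L^2$-velocity bound via $m=\min_i m_i>0$ is a detail the paper leaves implicit, but it is the same proof.
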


The proof of Theorem \ref{th1.8} now follows by Lemmas \ref{l2.1}-\ref{l2.5} and Lemma \ref{l2.8}.

\end{document}